\newtheorem{theorem}{Theorem}
\newtheorem{proposition}{Proposition}
\newtheorem{remark}{Remark}
\newtheorem{definition}{Definition}
\title{\bf  L\'{e}vy Laplacian on manifold and heat flows of differential forms}
\author{Boris Volkov\footnote{E-mail: \url{borisvolkov1986@gmail.com}; \href{https://www.mathnet.ru/eng/person94935}{mathnet.ru/eng/person94935}
}
\\
\vspace{0.2cm} \\
Steklov Mathematical Institute of Russian Academy of Sciences, \\
8 Gubkina str., Moscow, 119991, Russia}
\date{}
\begin{document}
\maketitle

\begin{abstract} 
The L\'evy Laplacian is an infinite-dimensional differential operator, which is interesting for its connection with the Yang-Mills gauge fields. The article proves the equivalence of various definitions of the L\'evy Laplacian on the manifold of $H^1$-paths on a Riemannian manifold.
The heat equation with the L\'evy Laplacian is considered. The tendency of some solutions of this heat equation to the locally constant functionals as time tends to infinity is studied. These solutions are constructed using heat flows of differential forms on the compact Riemannian manifold.
\end{abstract}

\textbf{Keywords:} L\'{e}vy Laplacian, heat flow, Milgram-Rosenbloom theorem, Yang--Mills heat flow

\section{Introduction}

The original definition of the L\'evy Laplacian by Paul L\'evy is the following~\cite{L1951}.
Let $\{e_n\}$ be an orthonormal basis in $L_2([0,1],\mathbb{R})$.  Then the L\'evy Laplacian (generated by the orthonormal basis $\{e_n\}$)
acts on a twice Fr\'{e}chet differentiable real-valued function $f$ on $L_2([0,1],\mathbb{R})$ as the Ces\`aro  mean of the second order directional derivatives along the vectors from this basis,
\begin{equation}
\label{Levy2}
\Delta^{\{e_n\}}_L f(x)=\lim_{n\to \infty} \frac 1n \sum_{k=1}^n\langle f''(x)e_k,e_k\rangle.
\end{equation}
Another way to define the L\'evy Laplacian is as an integral functional generated by a special type of the second derivative. Under certain conditions on an orthonormal basis $\{e_n\}$ these definitions coincide.
By analogy with these definitions, but in a more complex way, the Laplacian, which is also called the L\'evy Laplacian, can be defined for functions on an infinite-dimensional manifold of paths~\cite{AGV1993,AGV1994,LV2001}. The study of the L\'evy Laplacian on a manifold is motivated by its connection with the Yang-Mills gauge fields.
The equivalence of the Yang--Mills equations and the Laplace equations for the L\'evy Laplacian for connection in the trivial vector bundle over $\mathbb{R}^d$~\cite{AGV1993,AGV1994}
and in a vector bundle over the Riemannian manifold~\cite{LV2001} was shown. Namely, it was proved that a connection in a vector bundle is a solution of the Yang---Mills equations if and only if the parallel transport generated by the connection is a solution of the Laplace equation for the L\'evy Laplacian. Lately, it was shown that the L\'evy Laplacian is not invariant under infinite-dimensional rotations.  In the case of a path manifold on a four-dimensional Riemannian manifold, one can obtain the so-called modified L\'evy Laplacian  by the action of a special infinie dimensional rotation on the L\'evy Laplacian. A connection in a vector bundle over the  four-dimensional Riemannian manifold is an instanton if and only if the parallel transport generated by the connection is a solution of the Laplace equation for the modified L\'evy Laplacian~\cite{VolkovLLI,Volkov2020,Volkov2023,Volkov2022}. The  L\'evy Laplacian in the Malliavin calculus and its connection with the Yang--Mills theory were studied in~\cite{LV2001,Volkov2017,Volkov2018}.

 The approach to the Yang--Mills fields based on the L\'evy Laplacian goes back to the work on the connection between the Yang--Mills equations and infinite-dimensional loop equations~\cite{Polyakov1979,Aref'eva1979,Aref'eva1980,Polyakov1980,AV1981, Polyakov1987}. In particular, in the work~\cite{AV1981} the equations of motion of chiral fields on parallel transport with divergence associated with the L\'evy Laplacian were considered (see also~\cite{Volkov2017,Volkov2019}). Various approaches to the Yang-Mills equations based on parallel transport and holonomy, but not using the L\'evy Laplacian, were studied in~\cite{Gross,Driver,Bauer1998,Bauer2003,ABT,AT}. A review of homotopy and  holonomy approaches to gauge theories can be found in~\cite{Meneses}. In this regard the following result is important. If an operator-valued function on a path manifold has certain properties (strongly differentiability, group property, invariance with respect to reparameterization), then it is a parallel transport generated by some connection~\cite{Driver}. This result can be interpreted using the theory of induced representations for groupoids~\cite{Gibilisco1997}.

One of the main results of this paper is a proof of the equivalence of three definitions of the L\'evy Laplacian on an infinite-dimensional manifold: the covariant definition of the L\'evy Laplacian as a composition of the special gradient and the L\'evy divergence~\cite{Volkov2019a}, the original definition based on a parallelization  of the infinite dimensional manifold~\cite{LV2001}, and the definition of the L\'evy Laplacian as the Ces\`aro mean of the second directional derivatives~\cite{AS2006}. The second definition is a natural generalization of the definition for the flat case from~\cite{AGV1994}. The relationship between the different definitions of the L\'evy Laplacian has not been previously investigated  for the case of the manifold but only for the flat case in~\cite{Volkov2019}.

Using integrals of functions and 1-forms along curves, a certain class of functionals on the path space is constructed on which the action of the L\'evy Laplacian is well defined.
This class of functionals includes a family of eigenvectors for the L\'evy Laplacian. 
We study the heat equation for the L\'evy Laplacian.
We show that the heat flows of differential 0-forms and 1-forms generate solutions of the heat equation with the L\'evy Laplacian. Using Milgram---Rosembloom theorem~\cite{MilgramRosenbloom}, the pointwise convergence of solutions of this heat equation to locally constant functionals as time tends to infinity is shown. These locally constant functionals  take the same value on homotopy equivalent curves. The study of the heat equation for the L\'evy Laplacian and the behavior of its solution as time tends to infinity is of interest in particular in connection with the Yang-Mills heat equation~\cite{Volkov2019a}. In the commutative case, the Yang-Mills heat equations reduce to the heat equation for 1-forms~\cite{Rade}.

The paper is organized as follows. In~Sec.~\ref{SecManifolds},
we give
preliminary information about vector bundles over manifolds of curves.
In Sec.~\ref{SecCovLevyLaplacian},
 we present the definition of the L\'evy Laplacian as the composition
of the special divergence and the
gradient and discuss some properties of these differential operators. In Sec.~\ref{Sec4}
and in Sec.~\ref{CesaroMean}, we discuss different definitions of the L\'evy Laplacian on the manifold and prove the equivalence of three definitions of this operator. 
In Sec.~\ref{Secfunctionals}, we present the action of  the L\'evy Laplacian on the integrals along the curves and construct some families of eigenvectors for this operator.   In Sec.~\ref{SecHeat}, we show that the heat flows of differential 0-forms and 1-forms generate solutions of the heat equation with the L\'evy Laplacian. We also show the tendency of these solutions to the locally constant functionals as time tends to infinity is studied.
In Sec.~\ref{SecYMH}, the relationship between the  heat equation with the L\'evy Laplacian and the Yang-Mills heat equations is discussed.
Conclusions Sec.~\ref{Sec:Conclusions} resumes the work.

\section{Manifold of $H^1$-curves}
\label{SecManifolds}

In this section, we give preliminaries about infinite-dimensional manifolds of paths.
For more information, see~\cite{Driver,Klingenberg,Klingenberg2}.

The symbols
$H^0=H^0([0,1],\mathbb{R}^d)$  and $H^1=H^1([0,1],\mathbb{R}^d)$   denote the Hilbert spaces of $L_2$-functions
and $H^1$-functions  on
$[0,1]$ with values in $\mathbb{R}^d$ respectively.  These spaces are  endowed with standard scalar products
$(h_1,h_2)_0=\int_0^1(h_1(\tau),h_2(\tau))_{\mathbb{R}^d}d\tau$
and $(h_1,h_2)_1=(h_1,h_2)_0+(\dot{h}_1,\dot{h}_2)_0$
   respectively.
Let $H_0^1=\{h\in H^1\colon h(0)=0\}$
and $H_{0,0}^1=\{h\in H_0^1\colon h(1)=0\}$.

Let $M$ be a $d$-dimensional connected  Riemannian manifold and let $g$ denote the Riemannian metric on $M$. Let $\mathcal{P}ath$ be the Hilbert manifold of $H^1$-curves in  $M$.  This Hilbert manifold 
is modeled over the Hilbert space $H^1$~\cite{Driver}. We consider vector bundles $\mathcal H^0=\mathcal H^0(\mathcal{P}ath)$ and $\mathcal H^1=\mathcal H^1(\mathcal{P}ath)$   over the Hilbert manifold $\mathcal{P}ath$ of $H^0$ vector fields and $H^1$ vector fields along curves from $\mathcal{P}ath$ respectively~\cite{Klingenberg2,Klingenberg}. Both vector bundles will be required bellow for the covariant definition of the L\'evy Laplacian.
The fiber of $\mathcal H^0$ over $\gamma\in\mathcal{P}ath$ is denoted by $H^0_\gamma(TM)$. The Riemannian metric on the bundle $\mathcal H^0$ 
has the form
$$
G_{0}(X(\gamma),Y(\gamma))=\int_0^1g(X(\gamma;\tau),Y(\gamma;\tau))d\tau,
$$
where $X(\gamma)=X(\gamma;\cdot)$, $Y(\gamma)=Y(\gamma;\cdot)$ are vector fields along $\gamma$.

Let $H^1_\gamma(TM)$ denote a fiber of $\mathcal H^1$ over $\gamma$.
A $H^1$ vector field along $\gamma\in \mathcal{P}ath$ is a section in 
the pullback bundle $\gamma^\ast TM$.
 The Levi-Civita connection induces the connection on $\gamma^\ast TM$ and 
the covariant derivative $\nabla X(\gamma)$ of $X(\gamma)\in H^1_\gamma(TM)$  is the $H^0$ vector field along $\gamma$  such that
\begin{equation}
\label{nablaX}
\nabla X(\gamma;\tau)= \dot{X}(\gamma;\tau)+\Gamma(\gamma(\tau))(X(\gamma;\tau),\dot{\gamma}(\tau)).
\end{equation}
Here $\Gamma=(\Gamma^\mu_{\lambda \nu})$ are Cristoffel symbols of the Levi-Civita connection on $M$.  
The Riemannian metric 
on the bundle $\mathcal H^1$  is defined by the formula
$$
G_1(X(\gamma),Y(\gamma))=G_0(X(\gamma),Y(\gamma))+G_0(\nabla X(\gamma),\nabla Y(\gamma)).
$$
Let $Q_{\tau_2,\tau_1}(\gamma)\colon T_{\gamma(\tau_1)}M\to T_{\gamma(\tau_2)}M$ denote the parallel transport generated by the Levi-Civita connection  along the restriction of the curve  $\gamma\in \mathcal{P}ath$ on the interval $[\tau_1,\tau_2]$. Formula~(\ref{nablaX}) implies that for a $H^1$ vector field $X$ along $\gamma\in \mathcal{P}ath_m$ holds
\begin{equation}
\label{Xgamma}
\nabla{X}(\gamma;\tau)=Q_{\tau,0}(\gamma)\frac d{d\tau}(Q_{\tau,0}(\gamma)^{-1}X(\gamma;\tau)).
\end{equation}
So, the parallel transport $Q$ defines an isometric isomorphism between $H^1_\gamma(TM)$ and $H^1$. 
Let $\mathcal{H}^1_{0,0}$
denote the subbundle of $\mathcal{H}^1$
such that its fiber over $\gamma \in \mathcal{P}ath$ is a Hilbert space ${(H^1_{0,0})}_\gamma(TM)$  of all fields $X$ from $H^1_{\gamma}(TM)$ such that $X(\gamma;0)=X(\gamma;1)$.

The Levi-Civita connection on the $d$-dimensional manifold $M$
generates the canonical connection $\nabla^{\mathcal{H}^0}$ in the
infinite-dimensional bundle $\mathcal {H}^0$~\cite{Klingenberg2,Klingenberg}.
Let  $d(\cdot,\cdot)$ denote the distance on $M$  generated by the Riemannian  metric   $g$. For $\sigma\in \mathcal{P}ath$, let $$W(\sigma,\varepsilon)=\{\gamma\in\mathcal{P}ath\colon d(\sigma(\tau),\gamma(\tau))<\varepsilon \text{  for all $\tau\in[0,1]$}\}$$
and $$\widetilde{W}(\sigma,\varepsilon)=\{X\in H^1_{\sigma}(TM):\|X\|_\infty=\sup_{\tau\in[0,1]}\sqrt{g(X(\sigma;\tau),X(\sigma;\tau))}<\varepsilon\}.$$ The exponential mapping 
 $\mathrm{exp}$  on the manifold $M$  
 generates a homeomorphism
$
\mathrm{Exp}_\sigma\colon  \widetilde{W}(\sigma,\varepsilon)\to W(\sigma,\varepsilon) 
$
by the formula
$$
\mathrm{Exp}_\sigma(X)(\tau)=\mathrm{exp}_{\sigma(\tau)}(X(\tau)).
$$ 
In fact, the structure of the Hilbert manifold on $\mathcal{P}ath$ can be defined by the atlas  $(\mathrm{Exp}_\sigma^{-1}, W(\sigma,\varepsilon))$,  $\sigma\in C^1([0,1],M)$.
The infinite dimensional Cristoffel symbols  $\Gamma^{\sim}$ of the connection $\nabla^{\mathcal{H}^0}$  in the coordinate chart  $(\mathrm{Exp}_\sigma^{-1}, W(\sigma,\varepsilon))$   are defined as follows.
  If
$\gamma\in W(\sigma,\varepsilon)$,
 $X\in H^0_{\gamma}(TM)$ and $Y\in H^1_{\gamma}(TM)$, then for almost all $\tau$ the vector
 $(\Gamma^{\sim}(\gamma)(X,Y))(\tau)\in T_{\gamma(\tau)}M$  has the following expression 
\begin{equation*}
(\Gamma^{\sim}(\gamma)(X,Y))(\tau)=\Gamma(\gamma(\tau))(X(\tau),Y(\tau))
\end{equation*}
in the normal coordinate chart on $M$ at the point
$\sigma(\tau)$.
 Then, in this coordinate chart on $M$ at the point
$\sigma(\tau)$ we have the following expression for the covariant derivative
\begin{equation}
\label{covdev}
\nabla^{\mathcal
H^0}_YX(\gamma;\tau)=d_YX(\gamma;\tau)+\Gamma(\gamma(\tau))(X(\gamma;\tau),Y(\gamma;\tau)),
\end{equation}
where $X\in C^\infty(W(\sigma,\varepsilon),\mathcal H^0)$ and $Y\in C^\infty(W(\sigma,\varepsilon),\mathcal H^1)$ are smooth local sections in $\mathcal H^0$ and $\mathcal H^1$ respectively.

The Hilbert manifold $\mathcal{P}ath_m=\{\gamma\in \mathcal{P}ath\colon \gamma(0)=m\}$ of  parametrized $H^1$ curves with origin at $m\in M$, the manifold $\mathcal{L}oop=\{\gamma\in \mathcal{P}ath\colon \gamma(0)=\gamma(1)\}$ of parametrized loops, the manifold $\mathcal{L}oop_{m}=\{\gamma\in \mathcal{P}ath\colon \gamma(0)=\gamma(1)=m\}$  of parametrized loops based at $m\in M$  are submanifolds of finite codimension of $\mathcal{P}ath$. By analogy, the bundles $\mathcal{H}^0(\mathcal{M})$ (the fiber over $\gamma\in \mathcal{M}$ is the space of  $L_2$-fields along curves along $\gamma$) and $\mathcal{H}^1_{0,0}(\mathcal{M})$ (the fiber over $\gamma\in \mathcal{M}$ is the space $(H^1_{0,0})_\gamma(TM)$) over $\mathcal{M}\in \{\mathcal{P}ath_m,\mathcal{L}oop_{m},\mathcal{L}oop\}$ can be defined.
Note that $\mathcal{H}^{1}_{0,0}(\mathcal{L}oop_{m})$ is a tangent bundle of the manifold $\mathcal{L}oop_{m}$.

\section{$H^0$-gradient,  L\'evy divergence and L\'evy Laplacian}
\label{SecCovLevyLaplacian}

In this section, we consider the L\'evy Laplacian and some related differential operators, and study some of their properties.

The L\'evy Laplacian on a manifold can be defined as a composition of special gradient and divergence. 
First, define the gradient on $C^\infty(\mathcal{P}ath,\mathbb{R})$ using the Riemannian metric $G_0$ on the bundle of $H^0$ vector fields.

\begin{definition}
\label{H_0grad}
The global smooth section $\mathrm{grad}_{H^0}\varphi$ in $\mathcal{H}^0$ is the $H^0$-gradient of the function $\varphi \in C^\infty(\mathcal{P}ath,\mathbb{R})$   if
$$
G_0(\mathrm{grad}_{H^0}\varphi(\gamma), X(\gamma))=d_{X} \varphi(\gamma)
$$
for any $\gamma\in \mathcal{P}ath$ and any local smooth section $X$ in $\mathcal{H}_{0,0}^1$.
\end{definition}
For examples see Sec.~\ref{Secfunctionals}.
Note that an arbitrary function from $C^\infty(\mathcal{P}ath,\mathbb{R})$ may not have an $H^0$-gradient. 
Direct calculations can verify the fulfillment of the Leibniz rule for the $H^0$-gradient.
\begin{proposition}
If $\varphi_1,\varphi_2\in C^\infty(\mathcal{P}ath,\mathbb{R})$
have $H^0$-gradients then 
$$
\mathrm{grad}_{H^0}(\varphi_1\varphi_2)=\varphi_2\mathrm{grad}_{H^0}\varphi_1+\varphi_1\mathrm{grad}_{H^0}\varphi_2.
$$
\end{proposition}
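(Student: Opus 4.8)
The plan is to reduce the identity to the ordinary product rule for directional derivatives and then invoke the uniqueness of the $H^0$-gradient. First I would fix an arbitrary $\gamma\in\mathcal{P}ath$ and an arbitrary local smooth section $X$ in $\mathcal{H}^1_{0,0}$, and compute the directional derivative $d_X(\varphi_1\varphi_2)(\gamma)$. Since $d_X$ acts as a derivation on $C^\infty(\mathcal{P}ath,\mathbb{R})$, the Leibniz rule for directional derivatives gives $d_X(\varphi_1\varphi_2)(\gamma)=\varphi_2(\gamma)\,d_X\varphi_1(\gamma)+\varphi_1(\gamma)\,d_X\varphi_2(\gamma)$.

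Next I would rewrite each directional derivative using the defining relation of Definition~\ref{H_0grad}, namely $d_X\varphi_i(\gamma)=G_0(\mathrm{grad}_{H^0}\varphi_i(\gamma),X(\gamma))$, which is available by hypothesis since both functions possess $H^0$-gradients. Because $\varphi_1(\gamma),\varphi_2(\gamma)$ are real scalars and $G_0$ is $\mathbb{R}$-bilinear in each fiber, I can pull the scalars inside the metric, so that $\varphi_2(\gamma)G_0(\mathrm{grad}_{H^0}\varphi_1(\gamma),X(\gamma))=G_0(\varphi_2(\gamma)\mathrm{grad}_{H^0}\varphi_1(\gamma),X(\gamma))$, and similarly for the other summand. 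Collecting terms yields $d_X(\varphi_1\varphi_2)(\gamma)=G_0\bigl(\varphi_2(\gamma)\mathrm{grad}_{H^0}\varphi_1(\gamma)+\varphi_1(\gamma)\mathrm{grad}_{H^0}\varphi_2(\gamma),\,X(\gamma)\bigr)$.

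Then I would observe that the argument $\Phi:=\varphi_2\,\mathrm{grad}_{H^0}\varphi_1+\varphi_1\,\mathrm{grad}_{H^0}\varphi_2$ is itself a global smooth section of $\mathcal{H}^0$, being a linear combination of products of smooth scalar functions with smooth $\mathcal{H}^0$-sections. The computation above shows that $\Phi$ satisfies exactly the defining relation of the $H^0$-gradient of $\varphi_1\varphi_2$, i.e. $G_0(\Phi(\gamma),X(\gamma))=d_X(\varphi_1\varphi_2)(\gamma)$ for every test field $X$ in $\mathcal{H}^1_{0,0}$. To conclude that $\Phi=\mathrm{grad}_{H^0}(\varphi_1\varphi_2)$ I would appeal to the uniqueness of the $H^0$-gradient.

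The main (and essentially only) delicate point is this uniqueness: one must check that if a section $Z$ of $\mathcal{H}^0$ satisfies $G_0(Z(\gamma),X(\gamma))=0$ for every local smooth section $X$ in $\mathcal{H}^1_{0,0}$, then $Z\equiv 0$. This holds because the fiber of $\mathcal{H}^1_{0,0}$, consisting of $H^1$-fields subject to the endpoint constraint, is dense in the fiber $H^0_\gamma(TM)$ of $\mathcal{H}^0$ with respect to the $H^0$-norm — already the smooth fields vanishing near the endpoints exhaust a dense subspace of the $L^2$-fiber. Hence vanishing of the $G_0$-pairing against all such $X$ forces $Z(\gamma)=0$ in each fiber. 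Everything else is genuinely routine bilinearity together with the derivation property of $d_X$, so I expect no further obstacle.
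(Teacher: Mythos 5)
Your proof is correct and is exactly the ``direct calculation'' the paper alludes to (the paper states the proposition without writing out a proof): Leibniz rule for $d_X$, bilinearity of $G_0$, and uniqueness of the $H^0$-gradient via density of the admissible test fields in the $L^2$-fiber. No issues.
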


Let $T^2(\mathcal{H}^1_{0,0})$ be the vector bundle over $\mathcal{P}ath$  which fiber over  $\gamma\in \mathcal{P}ath$ is the space of continuous bilinear functionals  on ${(H^1_{0,0})}_\gamma(TM)$. 
Let $\hat{\otimes}^2T^*M$  and $\wedge^2T^*M$ be the bundles of
symmetic and antisymmetic tensors of type $(0,2)$ over $M$
respectively. Let  $\mathcal R_1$ be  the vector bundle over
$\mathcal{P}ath$ which fiber over  $ \gamma\in \mathcal{P}ath$ is the space of all
$H^0$-sections in $\hat{\otimes}^2T^*M$ along $\gamma$. Let
$\mathcal R_2$ be  the vector bundle over  $\mathcal{P}ath$ which fiber over
$ \gamma\in \mathcal{P}ath$ is the space of all $H^1$-sections in
$\wedge^2T^*M$ along $\gamma$.
Let $C^{\infty}_{AGV}(\mathcal{P}ath,T^2(\mathcal{H}^1_{0,0}))$  denote the space of all sections $K$
in $T^2(\mathcal{H}^1_{0,0})$ 
that have the form
\begin{multline}
\label{AGVtensors}
 K(\gamma)\langle X,Y\rangle=\\=\int_0^1\int_0^1
K^V(\gamma;\tau_1,\tau_2)\langle X(\gamma;\tau_1),Y(\gamma;\tau_2)\rangle d\tau_1d\tau_2+\int_0^1 K^L(\gamma;\tau)\langle X(\gamma;\tau),Y(\gamma;\tau)\rangle d\tau+\\
 +\frac 12\int_0^1 K^S(\gamma;\tau)\langle\nabla
X(\gamma;\tau), Y(\gamma;\tau)\rangle d\tau+\frac 12\int_0^1 K^S(\gamma;\tau)\langle\nabla
Y(\gamma;\tau), X(\gamma;\tau)\rangle d\tau,
\end{multline}
where $K^L$ and
$K^S$ are smooth sections in $\mathcal R_1$ and $\mathcal R_2$ respectively,
$K^V$ is a smooth section in the Hilbert tensor product bundle $\mathcal H^0\otimes \mathcal H^0$.
The kernel $K^V$ is called the Volterra kernel, $K^L$ is called the L\'{e}vy kernel and $K^S$ is called the singular kernel~\cite{AGV1993,AGV1994}.

\begin{definition}
The domain $\mathrm{dom}\, \mathrm{div}_L$ of the  L\'evy divergence consists of all $\psi\in C^\infty(\mathcal{P}ath,\mathcal{H}^0)$
such that there exists $K_\psi\in C^{\infty}_{AGV}(\mathcal{P}ath,T^2(\mathcal{H}^1_{0,0}))$  that
the following holds
\begin{equation*}
G_0(\nabla^{\mathcal
H^0}_{X}\psi(\gamma), Y(\gamma))= K_\psi(\gamma)\langle X(\gamma),Y(\gamma)\rangle
\end{equation*}
for any $\gamma\in \mathcal{P}ath$ and for any   local sections  $X,Y$ in $\mathcal{H}^1_{0,0}$.
 The L\'evy divergence is a linear mapping $\mathrm{div}_L\colon \mathrm{dom}\, \mathrm{div}_L \to  C^\infty(\mathcal{P}ath,\mathbb{R})$
defined by the formula
$$
\mathrm{div}_L \psi(\gamma)=\int_0^1  \operatorname{tr}_g( K^L_\psi(\gamma;\tau))d\tau,
$$
where $ \operatorname{tr}_g (K^L)=g^{\mu\nu}K^L_{\mu\nu}$ is the metric contraction of the L\'evy kernel $K^L_\psi$ of $K_\psi$.
\end{definition}

\begin{remark}
This definition is a modification of the definition of the functional divergence introduced in~\cite{AV1981}. In this form, the L\'evy divergence is introduced in~\cite{Volkov2019a}.  For the study of the connection of the L\'evy divergence with the Yang--Mills fields for the flat case and for the case of the Malliavin calculus see~\cite{Volkov2017,Volkov2019}. 
\end{remark}

\begin{proposition}
\label{Propdivergence}
If $\varphi$ has a $H^0$-gradient and $\psi$ belongs to the domain of the L\'evy divergence then
\begin{equation*}
\mathrm{div}_L (\varphi\psi)=\varphi\mathrm{div}_L \psi.
\end{equation*}
\end{proposition}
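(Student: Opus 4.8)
The plan is to differentiate the product $\varphi\psi$ using the connection $\nabla^{\mathcal H^0}$, pair the result against a test section with $G_0$, and then read off the L\'evy kernel of the resulting AGV tensor. Since $\nabla^{\mathcal H^0}$ is a connection on $\mathcal H^0$ and $\varphi$ is a scalar function, the first step is the Leibniz rule
$$
\nabla^{\mathcal H^0}_{X}(\varphi\psi)(\gamma)=(d_{X}\varphi(\gamma))\,\psi(\gamma)+\varphi(\gamma)\,\nabla^{\mathcal H^0}_{X}\psi(\gamma),
$$
valid for any local section $X$ in $\mathcal H^1_{0,0}$. Taking the $G_0$-inner product with $Y(\gamma)$ splits the expression into two terms. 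The second term is handled immediately by the hypothesis $\psi\in\mathrm{dom}\,\mathrm{div}_L$, giving $\varphi(\gamma)\,G_0(\nabla^{\mathcal H^0}_{X}\psi(\gamma),Y(\gamma))=\varphi(\gamma)K_\psi(\gamma)\langle X(\gamma),Y(\gamma)\rangle$.

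For the first term I would substitute $d_{X}\varphi(\gamma)=G_0(\mathrm{grad}_{H^0}\varphi(\gamma),X(\gamma))$ from Definition~\ref{H_0grad}, so that it becomes
$$
G_0(\mathrm{grad}_{H^0}\varphi(\gamma),X(\gamma))\cdot G_0(\psi(\gamma),Y(\gamma)),
$$
a product of two single integrals over $[0,1]$. The crucial observation, and essentially the only content of the proof, is that such a product of single integrals is a \emph{pure Volterra} term: it can be written as $\int_0^1\int_0^1 K^V(\gamma;\tau_1,\tau_2)\langle X(\gamma;\tau_1),Y(\gamma;\tau_2)\rangle\,d\tau_1d\tau_2$ with $K^V$ the rank-one kernel built from $\mathrm{grad}_{H^0}\varphi(\gamma)$ and $\psi(\gamma)$, and it carries no L\'evy component $K^L$ and no singular component $K^S$. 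This $K^V$ is a smooth section of $\mathcal H^0\otimes\mathcal H^0$ because $\mathrm{grad}_{H^0}\varphi$ and $\psi$ are smooth sections of $\mathcal H^0$.

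Combining the two contributions, the bilinear functional $G_0(\nabla^{\mathcal H^0}_{X}(\varphi\psi),Y)$ is represented by a tensor
$$
K_{\varphi\psi}=\varphi K_\psi+\big(\text{rank-one Volterra term from }\mathrm{grad}_{H^0}\varphi\big)
$$
lying in $C^\infty_{AGV}(\mathcal{P}ath,T^2(\mathcal H^1_{0,0}))$, which shows $\varphi\psi\in\mathrm{dom}\,\mathrm{div}_L$, and whose L\'evy kernel is exactly $K^L_{\varphi\psi}=\varphi K^L_\psi$. Applying the metric contraction $\operatorname{tr}_g$ and integrating over $\tau$ then yields $\mathrm{div}_L(\varphi\psi)(\gamma)=\varphi(\gamma)\int_0^1\operatorname{tr}_g(K^L_\psi(\gamma;\tau))\,d\tau=\varphi(\gamma)\mathrm{div}_L\psi(\gamma)$.

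The main obstacle here is conceptual rather than computational: one must verify that the extra Leibniz term produced by differentiating $\varphi$ lands entirely in the Volterra part of the AGV decomposition and not in the L\'evy part. This is precisely what makes the general product rule degenerate into plain scalar multiplication for $\mathrm{div}_L$, since only $K^L$ enters the definition of the L\'evy divergence. The remaining steps — checking smoothness of the new Volterra kernel and the linearity of $\operatorname{tr}_g$ and of integration — are routine.
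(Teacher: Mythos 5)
Your proposal is correct and follows essentially the same route as the paper: both apply the Leibniz rule for $\nabla^{\mathcal H^0}$, observe that the extra term $G_0(\mathrm{grad}_{H^0}\varphi,X)\,G_0(\psi,Y)$ contributes only a (rank-one) Volterra kernel so that $K^L_{\varphi\psi}=\varphi K^L_\psi$ and $K^S_{\varphi\psi}=\varphi K^S_\psi$, and conclude by noting that only the L\'evy kernel enters $\mathrm{div}_L$. Your writeup is in fact slightly more explicit than the paper's, which compresses this to ``direct calculations.''
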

\begin{proof}
Direct calculations show that
$G_0(\nabla^{\mathcal
H^0}_{X}\psi, Y)= K_{\varphi\psi}\langle X,Y\rangle$,
where $$K^V_{\varphi\psi}\langle X,Y\rangle=\varphi K^V_{\psi}\langle X,Y\rangle+G_0(\mathrm{grad}_{H_0}\varphi,X)\psi\langle Y\rangle,$$
$
K^L_{\varphi\psi}=\varphi K^L_{\psi}$ and
$
K^S_{\varphi\psi}=\varphi K^S_{\psi}.
$
\end{proof}

\begin{definition}
\label{LevyLapl}
  The value of the L\'evy Laplacian $\Delta_L$ on $\varphi\in C^\infty(\mathcal{P}ath,\mathbb{R})$
is defined by
\begin{equation*}
\label{covLaplaceL}
\Delta_L \varphi=\mathrm{div}_L (\mathrm{grad}_{H^0} \varphi).
\end{equation*}
\end{definition}
For examples see Sec.~\ref{Secfunctionals}. 
\begin{proposition}
\label{ChainRule}
Let  $
\Phi=(\Phi_1,\ldots,\Phi_{N})\colon \mathcal{P}ath \to\mathbb{R}^{N}$   be 
a  smooth vector function such that all it components 
belong to the domain of the L\'evy Laplacian. Let $\mathcal{F}\in C^{\infty}(\mathbb{R}^N,\mathbb{R}) $ and $F=\mathcal{F}\circ\Phi$. Then
\begin{equation}
\label{DeltaF}
\Delta_LF(\gamma)=(\nabla \mathcal{F}(\Phi(\gamma)),\Delta_L\Phi(\gamma))_{\mathbb{R}^{N}}.
\end{equation}
\end{proposition}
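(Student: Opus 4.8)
The plan is to reduce the statement to the product rule for the L\'evy divergence, Proposition~\ref{Propdivergence}, combined with the ordinary chain rule for the $H^0$-gradient. The conceptual point I would emphasize is that, even though $\Delta_L$ is a second-order operator, the identity $\mathrm{div}_L(\varphi\psi)=\varphi\,\mathrm{div}_L\psi$ carries \emph{no} correction term; consequently the Hessian of $\mathcal{F}$ never appears, and the chain rule~(\ref{DeltaF}) looks like a first-order (gradient-type) rule rather than the usual second-order one.

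First I would compute the $H^0$-gradient of $F=\mathcal{F}\circ\Phi$. Writing $g_i=(\partial_i\mathcal{F})\circ\Phi\in C^\infty(\mathcal{P}ath,\mathbb{R})$, the ordinary chain rule for the directional derivative gives, for every local section $X$ in $\mathcal{H}^1_{0,0}$,
$$
d_X F(\gamma)=\sum_{i=1}^N g_i(\gamma)\,d_X\Phi_i(\gamma)=\sum_{i=1}^N g_i(\gamma)\,G_0\big(\mathrm{grad}_{H^0}\Phi_i(\gamma),X(\gamma)\big),
$$
where the second equality uses that each $\Phi_i$ possesses an $H^0$-gradient (part of the hypothesis $\Phi_i\in\mathrm{dom}\,\Delta_L$). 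By Definition~\ref{H_0grad} and bilinearity of $G_0$, this exhibits an $H^0$-gradient for $F$, namely $\mathrm{grad}_{H^0}F=\sum_{i=1}^N g_i\,\mathrm{grad}_{H^0}\Phi_i$.

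The second step applies the L\'evy divergence. By linearity of $\mathrm{div}_L$ and Definition~\ref{LevyLapl},
$$
\Delta_L F=\mathrm{div}_L\Big(\sum_{i=1}^N g_i\,\mathrm{grad}_{H^0}\Phi_i\Big)=\sum_{i=1}^N\mathrm{div}_L\big(g_i\,\mathrm{grad}_{H^0}\Phi_i\big).
$$
To invoke Proposition~\ref{Propdivergence} term by term I need each $g_i$ to have an $H^0$-gradient and each $\mathrm{grad}_{H^0}\Phi_i$ to lie in $\mathrm{dom}\,\mathrm{div}_L$. The latter holds by hypothesis, since $\Phi_i\in\mathrm{dom}\,\Delta_L$ means precisely $\mathrm{grad}_{H^0}\Phi_i\in\mathrm{dom}\,\mathrm{div}_L$. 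The former follows by running the gradient chain rule once more: as $g_i=(\partial_i\mathcal{F})\circ\Phi$ with $\partial_i\mathcal{F}\in C^\infty(\mathbb{R}^N,\mathbb{R})$, the same argument yields $\mathrm{grad}_{H^0}g_i=\sum_{j}(\partial_j\partial_i\mathcal{F}\circ\Phi)\,\mathrm{grad}_{H^0}\Phi_j$. Proposition~\ref{Propdivergence} then gives $\mathrm{div}_L(g_i\,\mathrm{grad}_{H^0}\Phi_i)=g_i\,\mathrm{div}_L\,\mathrm{grad}_{H^0}\Phi_i=g_i\,\Delta_L\Phi_i$, and summing over $i$ produces
$$
\Delta_L F(\gamma)=\sum_{i=1}^N g_i(\gamma)\,\Delta_L\Phi_i(\gamma)=\big(\nabla\mathcal{F}(\Phi(\gamma)),\Delta_L\Phi(\gamma)\big)_{\mathbb{R}^N},
$$
which is~(\ref{DeltaF}).

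I expect the only genuine subtlety, and the true heart of the statement, to be the \emph{absence} of a second-order term; this is not something to grind out here, since it is already encoded in Proposition~\ref{Propdivergence}. Indeed, applying the computation in its proof with $\varphi=g_i$ and $\psi=\mathrm{grad}_{H^0}\Phi_i$, the extra contribution from differentiating $g_i$ enters only the Volterra kernel (the term $G_0(\mathrm{grad}_{H^0}\varphi,X)\,\psi\langle Y\rangle$), whereas the L\'evy kernel merely scales, $K^L_{\varphi\psi}=\varphi K^L_\psi$. Since $\mathrm{div}_L$ is built solely from the metric trace of the L\'evy kernel, the Volterra contribution, which is exactly where the Hessian $\partial_j\partial_i\mathcal{F}$ would otherwise live, is annihilated. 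Thus the argument is essentially bookkeeping of which kernel each term lands in, and the one thing to verify with care is the domain membership that makes Proposition~\ref{Propdivergence} applicable.
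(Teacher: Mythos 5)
Your proposal is correct and follows essentially the same route as the paper's own (much terser) proof: compute $\mathrm{grad}_{H^0}F=\sum_k(\partial_k\mathcal{F}\circ\Phi)\,\mathrm{grad}_{H^0}\Phi_k$ by the chain rule, note that each coefficient $\partial_k\mathcal{F}\circ\Phi$ has an $H^0$-gradient, and conclude via Proposition~\ref{Propdivergence}. Your closing remark about the Hessian contribution being absorbed into the Volterra kernel and hence annihilated by $\mathrm{div}_L$ is exactly the mechanism already recorded in the proof of Proposition~\ref{Propdivergence}, so nothing further is needed.
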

\begin{proof}
By the chain rule 
$
\mathrm{grad}_{H^0} F(\gamma)=\sum_{k=1}^N\frac{\partial }{\partial x_k}\mathcal{F}(\Phi(\gamma))\mathrm{grad}_{H^0}\Phi_k(\gamma)$.
All functions $\frac{\partial }{\partial x_k}\mathcal{F}(\Phi(\cdot))$ belong to the domain of $H^0$-gradient and equality~(\ref{DeltaF}) follows from Proposition~\ref{Propdivergence}.
\end{proof}

As a corollary, we can obtain that the Leibniz rule is satisfied for the L\'evy Laplacian, as for a first-order differential operator. This property has been noted in the literature for the flat case (see~\cite{Accardi}).
 
\begin{proposition}
\label{Leibnizproductrule}
If $\varphi_1,\varphi_2\in C^\infty(\mathcal{P}ath,\mathbb{R})$ belong to the domain of the L\'evy Laplacian then 
$$
\Delta_L(\varphi_1\varphi_2)=\varphi_1\Delta_L\varphi_2+\varphi_2\Delta_L\varphi_1.
$$
\end{proposition}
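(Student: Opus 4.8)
The plan is to recognize that the product $\varphi_1\varphi_2$ is precisely a composition of the form $\mathcal{F}\circ\Phi$, so that Proposition~\ref{ChainRule} applies directly. I would set $N=2$, define the smooth vector function $\Phi=(\varphi_1,\varphi_2)\colon \mathcal{P}ath\to\mathbb{R}^2$, and take $\mathcal{F}\in C^\infty(\mathbb{R}^2,\mathbb{R})$ to be the multiplication map $\mathcal{F}(x_1,x_2)=x_1x_2$. Both hypotheses of Proposition~\ref{ChainRule} are met by assumption: each $\varphi_i$ lies in the domain of the L\'evy Laplacian, so both components of $\Phi$ do, and $\mathcal{F}$ is smooth; moreover $F:=\mathcal{F}\circ\Phi=\varphi_1\varphi_2$.

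Next I would compute the Euclidean gradient $\nabla\mathcal{F}(x_1,x_2)=(x_2,x_1)$ and substitute into formula~(\ref{DeltaF}). Evaluating at $\Phi(\gamma)=(\varphi_1(\gamma),\varphi_2(\gamma))$ gives $\nabla\mathcal{F}(\Phi(\gamma))=(\varphi_2(\gamma),\varphi_1(\gamma))$, while $\Delta_L\Phi(\gamma)=(\Delta_L\varphi_1(\gamma),\Delta_L\varphi_2(\gamma))$. Taking the inner product in $\mathbb{R}^2$ then yields $\Delta_L(\varphi_1\varphi_2)(\gamma)=\varphi_2(\gamma)\Delta_L\varphi_1(\gamma)+\varphi_1(\gamma)\Delta_L\varphi_2(\gamma)$, which is exactly the asserted Leibniz identity.

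Since the whole argument is a specialization of Proposition~\ref{ChainRule}, I expect essentially no analytic obstacle; the only point worth checking is that the product genuinely falls within the scope of the chain rule, namely that the proof of Proposition~\ref{ChainRule} (which rests on the $H^0$-gradient Leibniz rule together with the multiplicative property in Proposition~\ref{Propdivergence}) requires no more than the two components $\varphi_1,\varphi_2$ lying in the domain of the L\'evy Laplacian. As an alternative that sidesteps the chain rule entirely, one can argue directly: by the Leibniz rule for the $H^0$-gradient one has $\mathrm{grad}_{H^0}(\varphi_1\varphi_2)=\varphi_2\,\mathrm{grad}_{H^0}\varphi_1+\varphi_1\,\mathrm{grad}_{H^0}\varphi_2$, and applying $\mathrm{div}_L$ together with its multiplicative property from Proposition~\ref{Propdivergence} reproduces the same identity. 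This second route makes transparent why no second-order cross terms survive: the L\'evy divergence behaves as a first-order operator on products of this form, so $\Delta_L=\mathrm{div}_L\circ\mathrm{grad}_{H^0}$ inherits the first-order Leibniz rule rather than a second-order one.
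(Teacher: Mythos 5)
Your proposal is correct and follows exactly the paper's route: the paper presents this proposition as a corollary of Proposition~\ref{ChainRule}, which is precisely your specialization $N=2$, $\Phi=(\varphi_1,\varphi_2)$, $\mathcal{F}(x_1,x_2)=x_1x_2$. Your alternative direct argument via the $H^0$-gradient Leibniz rule and Proposition~\ref{Propdivergence} is also just an unwinding of the same proof, so there is nothing further to add.
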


The definitions of the $H^0$-gradient, the L\'evy divergence, and the L\'evy Laplacian carry over unchanged to manifolds $\mathcal{M}\in \{\mathcal{P}ath_m,\mathcal{L}oop_{m},\mathcal{L}oop\}$. In the text of the article we consider operators on such manifolds, using the same notations. Note that a function on $\mathcal{L}oop_{m}$ has an $H^0$-gradient if it is strongly differentiable in the sense of the work~\cite{Driver}.

\section{L\'evy trace and L\'evy Laplacian}
\label{Sec4}
In this section, we consider another definition of the L\'evy Laplacian on a manifold and prove its equivalence to the previous one.

The manifold $\mathcal{P}ath_m$  is parallelizable. Therefore, one can define the L\'evy Laplacian by analogy with the flat case, as a composition of a special linear functional and the second derivative.
This definition of the L\'evy Laplacian  coincides with the definition of the L\'evy Laplacian on manifolds from~\cite{LV2001}. For the flat case $M=\mathbb{R}^d$ this definition coincides with the definition from~\cite{AGV1994}.

Let us fix an orthonormal basis $\{Z_1,\ldots,Z_d\}$   in $T_mM$.
We identify the Hilbert spaces $H_0^1=H_0^1([0,1],\mathbb{R}^d)$ and $H_0^1([0,1],T_mM)$ by
$$
H^1_0([0,1],\mathbb{R}^d)\ni h(\cdot)=(h^\mu(\cdot))\leftrightarrow Z_\mu h^\mu(\cdot)\in H^1_0([0,1],T_mM).
$$
Due to~(\ref{Xgamma}), for any $\gamma\in \mathcal{P}ath_m$ the Levi-Civita connection generates the canonical isometrical isomorphism  between
$H^1_0$ and $H^1_\gamma(TM)$, which action on $h\in H^1_0$ we denote by $\widetilde{h}$. This isomorphism acts   by the formula
\begin{equation}
\label{isomorhism}
\widetilde{h}(\gamma;\tau)=Q_{\tau,0}(\gamma)h(\tau)=Z_\mu(\gamma,\tau)h^\mu(\tau),
\end{equation}
where $Z_\mu(\gamma,\tau)=Q_{\tau,0}(\gamma)Z_\mu$ for $\mu\in \{1,\ldots,d\}$. 
 Due to isomorphism~(\ref{isomorhism}), the tangent bundle over $\mathcal{P}ath_m$ is trivial. It implies that for any smooth function $\varphi\in C^\infty( \mathcal{P}ath_m,\mathbb{R})$ there exists the function $\widetilde{D}\varphi\in C^\infty(\mathcal{P}ath_m,H^1_{0,0})$ such that $d_{\widetilde{h}}\varphi(\gamma)=\langle\widetilde{D}\varphi(\gamma),h\rangle$ for any $h\in H^1_{0,0}$. Also there exists the function  $\widetilde{D}^2\varphi\in C^\infty(\mathcal{P}ath_m,L(H^1_{0,0},H^1_{0,0}))$ such that 
$\langle d_{\widetilde{h_2}}\widetilde{D}\varphi(\gamma),h_1\rangle=\langle\widetilde{D}^2\varphi(\gamma)h_2,h_1\rangle$ for any $h_1,h_2\in H^1_{0,0}$. Here $L(H^1_{0,0},H^1_{0,0})$ is the space of linear continuous operators on $H^1_{0,0}$.

Let $T^2_{AGV}$  be the space of all continuous bilinear real-valued functionals on    $H_{0,0}^1\times H_{0,0}^1$ that have the form
\begin{multline*}
Q(u,v)=\int_0^1\int_0^1Q^V(\tau_1,\tau_2)\langle u(\tau_1),v(\tau_2)\rangle d\tau_1d\tau_2+\int_0^1Q^L(\tau)\langle u(\tau),v(\tau)\rangle d\tau+
\\
+\frac 12\int_0^1Q^S(\tau)\langle \dot{u}(\tau),v(\tau)\rangle d\tau+\frac 12\int_0^1Q^S(\tau)\langle\dot{v}(\tau),u(\tau)\rangle d\tau,
\end{multline*}
where
$Q^V\in L_2([0,1]\times[0,1],T^2(\mathbb{R}^d))$,
$Q^L\in L_2([0,1],Sym^2(\mathbb{R}^d))$,
$Q^S\in H^1([0,1],\Lambda^2(\mathbb{R}^d))$.

\begin{definition}
The L\'evy trace $\mathrm{tr}^{AGV}_L$ acts on $Q\in T^2_{AGV}$ by the formula
$$
\mathrm{tr}^{AGV}_LQ=\int_0^1\mathrm{tr}\,Q^L(\tau)d\tau.
$$
\end{definition}

\begin{definition}
The domain of the L\'evy Laplacian $\Delta_L^{AGV}$  is the space of  all smooth functions  $\varphi$ on $\mathcal{P}ath_m$ such that $\widetilde{D}^2\varphi(\gamma)\in T^2_{AGV}$ for all $\gamma\in \mathcal{P}ath_m$. The L\'evy Laplacian $\Delta_L^{AGV}$ 
 acts
on $\varphi$ by the formula
$$
\Delta_L^{AGV}\varphi(\gamma)=
\mathrm{tr}^{AGV}_L(\widetilde{D}^2\varphi(\gamma)).
$$
\end{definition}

\begin{theorem} If the function $\varphi\colon\mathcal{P}ath_m\to \mathbb{R}$ belongs to the domain of the L\'evy Laplacian $\Delta_L$ then
$$
\Delta_L^{AGV}\varphi=\Delta_L\varphi.
$$
\end{theorem}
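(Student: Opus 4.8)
The plan is to evaluate both operators on the global parallel frame~(\ref{isomorhism}) and thereby reduce the statement to an identity between two Hessian-type bilinear forms. Fix $\gamma\in\mathcal{P}ath_m$, set $\psi=\mathrm{grad}_{H^0}\varphi$, and for $a,b\in H^1_{0,0}$ let $\widetilde a,\widetilde b$ be the corresponding vector fields. Because the parallel transport $Q_{\tau,0}(\gamma)$ is a fiberwise isometry, in this frame the contraction $\operatorname{tr}_g$ becomes the ordinary trace, so that $\Delta_L\varphi(\gamma)=\int_0^1\operatorname{tr} K^L_\psi(\gamma;\tau)\,d\tau$ and $\Delta_L^{AGV}\varphi(\gamma)=\int_0^1\operatorname{tr} Q^L(\tau)\,d\tau$, where $K_\psi$ is the kernel of $\psi$ and $Q=\widetilde D^2\varphi(\gamma)$. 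It therefore suffices to prove that the L\'evy kernels of $K_\psi$ and of $Q$ coincide.

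First I would establish a Hessian identity. The connection $\nabla^{\mathcal H^0}$ is compatible with $G_0$, which follows by integrating the pointwise Levi-Civita compatibility over $\tau$; combined with the gradient property $G_0(\psi,\widetilde b)=d_{\widetilde b}\varphi$ from Definition~\ref{H_0grad} (valid since $\widetilde b$ is a section of $\mathcal H^1_{0,0}$), this gives
\begin{equation*}
K_\psi\langle\widetilde a,\widetilde b\rangle=G_0\!\left(\nabla^{\mathcal H^0}_{\widetilde a}\psi,\widetilde b\right)=d_{\widetilde a}\,d_{\widetilde b}\varphi-G_0\!\left(\psi,\nabla^{\mathcal H^0}_{\widetilde a}\widetilde b\right).
\end{equation*}
Unravelling the definitions of $\widetilde D$ and $\widetilde D^2$ identifies the first term as $d_{\widetilde a}d_{\widetilde b}\varphi=\langle\widetilde D^2\varphi(\gamma)a,b\rangle=Q(a,b)$. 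Thus the entire discrepancy between the two Hessians is the correction $G_0(\psi,\nabla^{\mathcal H^0}_{\widetilde a}\widetilde b)$, and the claim reduces to showing that this correction contributes nothing to the L\'evy kernel.

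This last point is the crux, and I expect it to be the main obstacle. Here I would compute $\nabla^{\mathcal H^0}_{\widetilde a}\widetilde b$, the path-space covariant derivative of the frame field $\widetilde b(\gamma)=Q_{\cdot,0}(\gamma)b$. Since $\widetilde b$ is manufactured from parallel transport, this derivative is controlled by the variation of parallel transport under a variation of the base curve, which is the standard curvature formula: in the parallel frame, $Q_{\tau,0}(\gamma)^{-1}\nabla^{\mathcal H^0}_{\widetilde a}\widetilde b(\gamma;\tau)$ is an integral over $u\in[0,\tau]$ of a bounded curvature kernel contracted against $a(u)$ and $b(\tau)$. Pairing with $\psi$ then yields a bilinear form of the shape $\int_0^1\int_0^\tau(\cdots)[a(u),b(\tau)]\,du\,d\tau$, that is, a form with a bounded $L_2$ kernel and no derivative of $a$; by the uniqueness of the decomposition~(\ref{AGVtensors}) such a form is purely Volterra, so both its L\'evy and singular kernels vanish. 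Verifying the precise curvature formula, and that its boundary behaviour at $u=\tau$ produces no diagonal contribution, is the hard part.

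Combining the steps, the relation $Q(a,b)=K_\psi\langle\widetilde a,\widetilde b\rangle+G_0(\psi,\nabla^{\mathcal H^0}_{\widetilde a}\widetilde b)$ presents $\widetilde D^2\varphi(\gamma)$ as the sum of $K_\psi$, which lies in $T^2_{AGV}$ because $\varphi\in\mathrm{dom}\,\Delta_L$, and a Volterra form; this both confirms $\varphi\in\mathrm{dom}\,\Delta_L^{AGV}$ and shows that passing from $K_\psi$ to $Q$ alters only the Volterra kernel, so their L\'evy kernels agree. Taking the L\'evy trace gives $\Delta_L^{AGV}\varphi(\gamma)=\Delta_L\varphi(\gamma)$. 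The curvature formula is intrinsic and holds for every $H^1$ curve; should one prefer to compute it at a $C^1$ curve, where centering the normal chart of Sec.~\ref{SecManifolds} at $\gamma$ kills the Christoffel term in~(\ref{covdev}), the general case follows by density of $C^1$ curves and continuity of both sides.
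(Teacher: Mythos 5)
Your proposal is correct and follows essentially the same route as the paper: you use the metric compatibility of $\nabla^{\mathcal H^0}$ to write $\langle\widetilde D^2\varphi(\gamma)a,b\rangle$ as $G_0(\nabla^{\mathcal H^0}_{\widetilde a}\mathrm{grad}_{H^0}\varphi,\widetilde b)$ plus the correction $G_0(\mathrm{grad}_{H^0}\varphi,\nabla^{\mathcal H^0}_{\widetilde a}\widetilde b)$, and then invoke the curvature formula for the derivative of parallel transport to show the correction is a pure Volterra form, so the L\'evy kernels (hence the L\'evy traces) agree. This is exactly the paper's argument, which likewise computes $\nabla^{\mathcal H^0}_{\widetilde h_1}\widetilde h_2(\gamma;\tau_2)=-\int_0^{\tau_2}\widetilde R_{\tau_2,\tau_1}(\gamma)\langle h_2(\tau_2),h_1(\tau_1)\rangle\,d\tau_1$ and observes that only the Volterra kernel of $\widetilde D^2\varphi(\gamma)$ is altered.
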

\begin{proof}
Due to $\varphi$ has the $H^0$-gradient,
$$
d_{\widetilde{h}}\varphi(\gamma)=\langle\widetilde{D}\varphi(\gamma),h\rangle
=G_0(\mathrm{grad}_{H^0}\varphi,\widetilde{h}).
$$
Hence
\begin{multline*}
\langle\widetilde{D}^2\varphi(\gamma)h_2,h_1\rangle=\langle d_{\widetilde{h_1}}\widetilde{D}\varphi(\gamma),h_2\rangle=d_{\widetilde{h}_1}G_0(\mathrm{grad}_{H^0}\varphi(\gamma),\widetilde{h}_2(\gamma))=\\=G_0(\nabla^{\mathcal
H^0}_{\widetilde{h}_1}\mathrm{grad}_{H^0}\varphi(\gamma),\widetilde{h}_2(\gamma))+G_0(\mathrm{grad}_{H^0}\varphi(\gamma),\nabla^{\mathcal
H^0}_{\widetilde{h}_1}\widetilde{h}_2(\gamma)).
\end{multline*}
The last equality holds due to the connection  $\nabla^{\mathcal
H^0}$ is a Riemannian connection.
Formula for the first derivative  the parallel transport in the case of the parallel transport generated by the Levi-Civita connection (see~\cite{Aref'eva1980,Driver,
Volkov2020}) implies 
\begin{multline}
\label{dhh}
d_{\widetilde{h}_1}(\widetilde{h}_2(\gamma;\tau_2))=d_{\widetilde{h}_1}Q_{\tau_2,0}(\gamma)h_2(\tau_2)=\\
=-\int_{0}^{\tau_2}Q_{\tau_2,\tau_1}(\gamma)R(\gamma(\tau_1))\langle Q_{\tau_1,0}(\gamma)h_2(\tau_2),\widetilde{h_1}(\gamma;\tau_1),\dot{\gamma}(\tau_1)\rangle d\tau_1-\\-\Gamma(\gamma(\tau_2))\langle\widetilde{h}_2(\gamma;\tau_2),\widetilde{h}_1(\gamma;\tau_2)\rangle,
\end{multline}
where $R=(R^\lambda_{\mu \nu \kappa})$ is a Riemannian curvature tensor.  Let $$\widetilde{R}_{\tau_2,\tau_1}(\gamma)\colon T_{\gamma(0)}M\times T_{\gamma(0)}M\to T_{\gamma(\tau_2)}M,$$ where $0\leq \tau_1\leq \tau_2\leq 1$, is defined as
$$\widetilde{R}_{\tau_2,\tau_1}(\gamma)\langle X,Y\rangle=Q_{\tau_2,\tau_1}(\gamma)R(\gamma(\tau_1))\langle Q_{\tau_1,0}(\gamma)X,Q_{\tau_1,0}(\gamma)Y,\dot{\gamma}(\tau_1)\rangle.$$
Formulas~(\ref{dhh}) and~(\ref{covdev}) together imply
$$
\nabla^{\mathcal
H^0}_{\widetilde{h}_1}\widetilde{h}_2(\gamma;\tau_2)=-\int_{0}^{\tau_2}\widetilde{R}_{\tau_2,\tau_1}(\gamma)\langle h_2(\tau_2),h_1(\tau_1)\rangle d\tau_1.
$$
Let $G_0(\nabla^{\mathcal
H^0}_{\widetilde{h}_1}\mathrm{grad}_{H^0}\varphi(\gamma), \widetilde{h}_2)= K(\gamma)\langle \widetilde{h}_1(\gamma),\widetilde{h}_2(\gamma)\rangle$. Then $\langle\widetilde{D}^2\varphi(\gamma)h_1,h_2\rangle=Q(\gamma)\langle h_1,h_2\rangle$,
where the Volterra kernel of $Q$ has the form
\begin{multline*}
\label{KV}
Q^V(\gamma;\tau_1,\tau_2)\langle h_1(\tau_1),h_2(\tau_2)\rangle=\\=\begin{cases}
K^V(\gamma;\tau_1,\tau_2)\langle\widetilde{h}_1(\gamma,\tau_1),\widetilde{h}_2(\gamma,\tau_2)\rangle-\\-g(\mathrm{grad}_{H^0}\varphi(\gamma;\tau_2),\widetilde{R}_{\tau_2,\tau_1}(\gamma)\langle h_2(\tau_2),h_1(\tau_1)\rangle)
,&\text{if $\tau_2\geq \tau_1$,}\\
K^V(\gamma;\tau_1,\tau_2)\langle\widetilde{h}_1(\gamma,\tau_1),\widetilde{h}_2(\gamma,\tau_2)\rangle
,&\text{if $\tau_2<\tau_1$},
\end{cases}
\end{multline*}
the L\'evy kernel of $Q$ has the form
\begin{equation*}
Q^L(\gamma;\tau)\langle h_1(\tau),h_2(\tau)\rangle=K^L(\gamma;\tau)\langle \widetilde{h}_1(\gamma;\tau),\widetilde{h}_2(\gamma;\tau)\rangle,
\end{equation*}
and the singular kernel of $Q$ has the form
\begin{equation*}
Q^S(\gamma;\tau)\langle \dot{h}_1(\tau),h_2(\tau)\rangle=K^S(\gamma;\tau)\langle \nabla\widetilde{h}_1(\gamma;\tau),\widetilde{h}_2(\gamma;\tau)\rangle.
\end{equation*}
Then
$$
\Delta_L^{AGV}\varphi(\gamma)=\int_0^1\mathrm{tr}\,Q^L(\gamma;\tau)d\tau=\int_0^1  \operatorname{tr}_g (K^L(\gamma;\tau))d\tau=\Delta_L\varphi(\gamma).
$$
\end{proof}

\section{L\'evy Laplacian as the Ces\`aro mean of the second order directional derivatives}
\label{CesaroMean}
In this section we consider a definition of the L\'evy Laplacian as the Ces\`aro mean of the second order directional derivatives and prove its equivalence to the previous two definitions.

The definition of the L\'evy Laplacian on the manifold  as the Ces\`aro mean of the second order directional derivatives was introduced in~\cite{AS2006} by analogy with the original P.~L\'evy's definition~(\ref{Levy2}). This definition depends on the choice of a special basis in $L_2([0,1],\mathbb{R})$. 
 An orthonormal basis $\{e_n\}$ in $L_2([0,1],\mathbb{R})$
is called  weakly uniformly dense~\cite{L1951} if 
$\lim_{n\to\infty}\int_0^1 h(\tau)\left(\frac 1n\sum_{k=1}^n
e_k^2(\tau)-1\right)d\tau=0$
for any $h\in L_{\infty}([0,1],\mathbb{R})$.
In this section, we assume that 
$\{e_n\}$ is a weakly uniformly dense  orthonormal basis in $L_2([0,1],\mathbb{R})$ such that all its elements belong to
  $H^1_{0,0}([0,1],\mathbb{R})$. We also assume that $\{e_n\}$ is uniformly bounded functions on the interval $[0,1]$. The main example of such an orthonormal basis is $e_n(\tau)=\sqrt{2}\sin{(n\pi \tau)}$. Let $e_{\mu, n}=Z_\mu e_n$, where $\{Z_\mu\}$ is an orthonormal basis in $\mathbb{R}^d$.  Then the L\'evy trace can be expressed as the Ces\`aro mean.
\begin{proposition}
\label{p}
If $Q\in T^2_{AGV}$, 
then
\begin{equation*}
\label{tr=tr}
\mathrm{tr}^{AGV}_L
Q=\lim_{n\to\infty}\frac 1n\sum_{k=1}^n\sum_{\mu=1}^dQ(e_{\mu,n},e_{\mu,n}).
\end{equation*}
\end{proposition}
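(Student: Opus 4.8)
The plan is to evaluate the summand $\sum_{\mu=1}^d Q(e_{\mu,k},e_{\mu,k})$ explicitly (interpreting the inner index as $k$, i.e.\ the running index of the Ces\`aro sum), then average and pass to the limit kernel by kernel. First I would substitute $e_{\mu,k}=Z_\mu e_k$ into the defining formula for $Q\in T^2_{AGV}$ and write $Q^V,Q^L,Q^S$ in components relative to the orthonormal frame $\{Z_\mu\}$. Summing over $\mu$ and using $\sum_{\mu=1}^d (Z_\mu)^\alpha (Z_\mu)^\beta=\delta^{\alpha\beta}$ turns each $\mathbb{R}^d$-contraction into a trace. The singular term disappears completely before any limit: since $Q^S(\tau)\in\Lambda^2(\mathbb{R}^d)$ is antisymmetric, $Q^S(\tau)\langle Z_\mu,Z_\mu\rangle=0$ for every $\mu$. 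This yields
$$
\sum_{\mu=1}^d Q(e_{\mu,k},e_{\mu,k})=\int_0^1\!\!\int_0^1 \mathrm{tr}\,Q^V(\tau_1,\tau_2)\, e_k(\tau_1)e_k(\tau_2)\,d\tau_1 d\tau_2+\int_0^1 \mathrm{tr}\,Q^L(\tau)\, e_k^2(\tau)\, d\tau.
$$

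It then remains to Ces\`aro-average in $k$ and let $n\to\infty$, handling the Volterra and L\'evy contributions separately. For the Volterra term, set $c_k=\int_0^1\!\int_0^1 \mathrm{tr}\,Q^V(\tau_1,\tau_2)\, e_k(\tau_1)e_k(\tau_2)\,d\tau_1 d\tau_2=\langle \mathrm{tr}\,Q^V,\, e_k\otimes e_k\rangle_{L_2([0,1]^2)}$. Since $\{e_j\otimes e_l\}_{j,l}$ is an orthonormal basis of $L_2([0,1]^2)$ and $\mathrm{tr}\,Q^V\in L_2([0,1]^2)$, Bessel's inequality gives $\sum_k c_k^2\le \|\mathrm{tr}\,Q^V\|_{L_2}^2<\infty$, so $c_k\to0$ and hence $\tfrac1n\sum_{k=1}^n c_k\to0$. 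Thus the Volterra kernel makes no contribution to the limit.

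The core of the argument, and the main obstacle, is the L\'evy term $\tfrac1n\sum_{k=1}^n\int_0^1 \mathrm{tr}\,Q^L(\tau)\, e_k^2(\tau)\,d\tau=\int_0^1 \mathrm{tr}\,Q^L(\tau)\bigl(\tfrac1n\sum_{k=1}^n e_k^2(\tau)\bigr)d\tau$, which I want to converge to $\int_0^1 \mathrm{tr}\,Q^L(\tau)\,d\tau=\mathrm{tr}^{AGV}_L Q$. Equivalently, setting $\phi_n:=\tfrac1n\sum_{k=1}^n e_k^2-1$, I need $\int_0^1 (\mathrm{tr}\,Q^L)\,\phi_n\,d\tau\to0$. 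The weak uniform density of $\{e_k\}$ gives this only for test functions in $L_\infty$, whereas $\mathrm{tr}\,Q^L$ is merely $L_2$. Here the assumed uniform boundedness of $\{e_k\}$ is essential: it bounds $\{\phi_n\}$ in $L_\infty$, hence uniformly in $L_2$, and together with convergence to $0$ against the dense family $L_\infty\subset L_2$ this upgrades $\phi_n\rightharpoonup 0$ to weak convergence in $L_2$. Testing against $\mathrm{tr}\,Q^L\in L_2$ then produces the L\'evy contribution $\int_0^1 \mathrm{tr}\,Q^L\,d\tau$, and adding the three contributions gives $\mathrm{tr}^{AGV}_L Q$, as claimed.
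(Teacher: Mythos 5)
Your argument is correct and complete. Note that the paper itself offers no proof of this proposition, deferring entirely to the reference \cite{Volkov2019}, so there is no in-text argument to compare against; your proof is the standard one for such trace formulas, and it correctly isolates where each hypothesis enters: antisymmetry of $Q^S$ kills the singular term pointwise, Bessel's inequality for the orthonormal family $\{e_k\otimes e_k\}$ in $L_2([0,1]^2)$ kills the Volterra term after Ces\`aro averaging, and the uniform boundedness of $\{e_k\}$ is precisely what upgrades weak uniform density (convergence against $L_\infty$ test functions) to weak $L_2$-convergence of $\frac 1n\sum_{k=1}^n e_k^2-1$ to zero, so that it may be tested against $\mathrm{tr}\,Q^L\in L_2$. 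You also correctly read the inner index in the statement as $k$ rather than $n$, which is a typo in the paper.
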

For proof see~\cite{Volkov2019}.

As in Sec.~\ref{Sec4}, we identify $\{Z_1,\ldots,Z_d\}$ with the basis in $T_mM$. Then the vector functions $\{e_{\mu, n}\}$ generate the vector fields $\{\widetilde{e_{\mu, n}}\}$ on $\mathcal{P}ath_m$.
Moreover,
$\{\widetilde{e_{\mu, n}}(\gamma)\}$ form an orthonormal basis in  the fiber of the  bundle $\mathcal H^0$ over  $\gamma\in \mathcal{P}ath_m$
which elements belong to the fiber of the  bundle $\mathcal H^1_{0,0}$.

\begin{definition}
\label{def1}  The  L\'evy Laplacian $\Delta^{\{e_{\mu,n}\}}_L$, generated by  $\{e_{\mu,n}\}$, is a linear mapping $$\Delta^{\{e_{\mu,n}\}}_L:\mathrm{dom}\Delta^{\{e_{\mu,n}\}}_L\to
C^\infty(\mathcal{P}ath_m,\mathbb{R})$$ defined by
\begin{equation*}
\label{formld}
 \Delta^{\{e_{\mu,n}\}}_L\varphi(\gamma)=\lim_{n\to\infty}\frac 1n\sum_{k=1}^n\sum_{\mu=1}^d\left.\frac {d^2}{ds^2}\right|_{s=0}\varphi\left(\mathrm{Exp}_\gamma(s\widetilde{e_{\mu,k}})\right),
\end{equation*}
where  $\mathrm{dom}  \Delta^{\{e_{\mu,n}\}}_L$ is the space of all functions $\varphi\in C^\infty(\mathcal{P}ath_m,\mathbb{R})$ such that the right side of~(\ref{formld}) is a function from  $C^\infty(\mathcal{P}ath_m,\mathbb{R})$.  
\end{definition}

\begin{theorem}
Let $\varphi$ belong to the domain of the operator $\Delta^{AGV}_L$. Then
$$
\Delta^{\{e_{\mu,n}\}}_L\varphi=\Delta_{L}^{AGV}\varphi.
$$
\end{theorem}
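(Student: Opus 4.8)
The plan is to reduce the statement, by means of Proposition~\ref{p}, to the fact that the Ces\`aro mean only detects the L\'evy kernel of the second derivative, and that the discrepancy between the second directional derivative along the curves $s\mapsto\mathrm{Exp}_\gamma(s\widetilde{e_{\mu,k}})$ and the bilinear form $\widetilde{D}^2\varphi(\gamma)$ is concentrated in the Volterra kernel. First I would observe that by the very definition of $\mathrm{dom}\,\Delta^{AGV}_L$ the form $Q:=\widetilde{D}^2\varphi(\gamma)$ lies in $T^2_{AGV}$, so Proposition~\ref{p} applies and gives
\begin{equation*}
\Delta^{AGV}_L\varphi(\gamma)=\mathrm{tr}^{AGV}_L Q=\lim_{n\to\infty}\frac1n\sum_{k=1}^n\sum_{\mu=1}^d\langle\widetilde{D}^2\varphi(\gamma)e_{\mu,k},e_{\mu,k}\rangle .
\end{equation*}
Hence it suffices to compare, term by term, $\langle\widetilde{D}^2\varphi(\gamma)e_{\mu,k},e_{\mu,k}\rangle$ with $\left.\frac{d^2}{ds^2}\right|_{s=0}\varphi(\mathrm{Exp}_\gamma(s\widetilde{e_{\mu,k}}))$ and to show that their difference averages to zero.

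Next I would compute the second directional derivative. Since $\mathrm{Exp}_\gamma$ is built pointwise from the exponential map of $M$ and the induced Christoffel symbols $\Gamma^\sim$ act pointwise, each curve $s\mapsto\mathrm{Exp}_\gamma(sX)$ is a geodesic of the induced connection, so that $\mathrm{Exp}_\gamma$ is a normal chart at $\gamma$. Writing $f=\varphi\circ\mathrm{Exp}_\gamma$ and using $D\mathrm{Exp}_\gamma(0)=\mathrm{id}$, one gets $\left.\frac{d^2}{ds^2}\right|_{s=0}\varphi(\mathrm{Exp}_\gamma(sX))=f''(0)[X,X]$, the flat second derivative in this chart. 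The operator $\widetilde{D}^2\varphi(\gamma)$, on the other hand, is the second derivative computed in the moving parallel-transport frame $\widetilde h(\gamma;\tau)=Q_{\tau,0}(\gamma)h(\tau)$. Comparing the two trivializations and differentiating the frame by formula~(\ref{dhh}) exactly as in the proof of the previous theorem, the difference is the first-order term of $\varphi$ evaluated on the covariant derivative of the frame; the Christoffel contribution of~(\ref{dhh}) cancels against that of~(\ref{covdev}), so that $\nabla^{\mathcal{H}^0}_{\widetilde{e_{\mu,k}}}\widetilde{e_{\mu,k}}$ reduces to the purely curvature expression and
\begin{equation*}
\left.\frac{d^2}{ds^2}\right|_{s=0}\varphi(\mathrm{Exp}_\gamma(s\widetilde{e_{\mu,k}}))-\langle\widetilde{D}^2\varphi(\gamma)e_{\mu,k},e_{\mu,k}\rangle=\int_0^1\int_0^1\Phi_\mu(\tau_1,\tau_2)e_k(\tau_1)e_k(\tau_2)\,d\tau_1 d\tau_2,
\end{equation*}
where $\Phi_\mu$ is assembled from $\widetilde{D}\varphi(\gamma)$ and $\widetilde R_{\tau_2,\tau_1}(\gamma)\langle Z_\mu,Z_\mu\rangle$ and is supported on $\{\tau_1\le\tau_2\}$. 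In particular this discrepancy is a form of type $T^2_{AGV}$ whose L\'evy and singular kernels vanish, the only nonzero kernel being the Volterra kernel.

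Finally, the correction averages to zero. A form whose L\'evy kernel vanishes has zero L\'evy trace, which by Proposition~\ref{p} is precisely the limit of the corresponding Ces\`aro means; concretely, since $\Phi_\mu\in L_2([0,1]^2)$ defines a Hilbert--Schmidt operator, $\sum_k\langle\Phi_\mu e_k,e_k\rangle^2\le\|\Phi_\mu\|_{HS}^2<\infty$, whence $\langle\Phi_\mu e_k,e_k\rangle\to0$ and its Ces\`aro mean tends to $0$. Averaging the displayed identity over $k$ and $\mu$ and letting $n\to\infty$ therefore gives $\Delta^{\{e_{\mu,n}\}}_L\varphi(\gamma)=\Delta^{AGV}_L\varphi(\gamma)$.

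I expect the main obstacle to be the second step: extracting the precise curvature form of the discrepancy. This requires a careful second-order expansion of $\mathrm{Exp}_\gamma$ together with the moving frame $Q_{\tau,0}(\gamma_s)$ --- in effect carrying the computation behind~(\ref{dhh}) to second order --- and checking that every diagonal (Christoffel) contribution cancels, so that only the off-diagonal curvature kernel survives and the L\'evy kernel of the difference is genuinely zero. One must also confirm that $\Phi_\mu$ is truly square-integrable, which is exactly where the hypothesis $\varphi\in\mathrm{dom}\,\Delta^{AGV}_L$, forcing $\widetilde{D}^2\varphi(\gamma)\in T^2_{AGV}$, is used.
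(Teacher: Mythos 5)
Your overall strategy---reduce both sides to Proposition~\ref{p} and compare kernel by kernel---is the same as the paper's, but your middle step is genuinely different, and that is exactly where the gap sits. The paper's proof is much shorter: it asserts the identity $\frac{d}{ds}\varphi(\mathrm{Exp}_\gamma(s\widetilde{h}))=\langle\widetilde{D}\varphi(\mathrm{Exp}_\gamma(s\widetilde{h})),h\rangle$ for all $s$, differentiates once more, and obtains $\left.\frac{d^2}{ds^2}\right|_{s=0}\varphi(\mathrm{Exp}_\gamma(s\widetilde{h}))=\langle\widetilde{D}^2\varphi(\gamma)h,h\rangle$ with \emph{no} correction term, after which Proposition~\ref{p} finishes the argument immediately. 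You instead observe that the velocity of $s\mapsto\mathrm{Exp}_\gamma(s\widetilde{h})$ is the geodesic transport of $\widetilde{h}(\gamma)$ rather than the frame field $\widetilde{h}$ evaluated along the deformed curve, so the two quantities differ by a first-order term $d_{\widetilde{V_h}}\varphi(\gamma)$ with $V_h(\tau)=\int_0^{\tau}\widetilde{R}_{\tau,\sigma}(\gamma)\langle h(\tau),h(\sigma)\rangle\,d\sigma$, which you then must kill in the Ces\`aro mean. This is a legitimate point (the paper's identity for $s\neq 0$ is not justified there), but it transfers the whole burden of the proof onto your claim that the discrepancy is ``purely Volterra,'' and that claim is not established.

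Concretely: under the stated hypothesis the only available representation of the first derivative is $d_{\widetilde{v}}\varphi(\gamma)=\langle\widetilde{D}\varphi(\gamma),v\rangle_{H^1}$, and the $H^1$ pairing differentiates $v$. Since $V_h$ is an integral with variable upper limit, its covariant derivative contains the \emph{diagonal} term $\widetilde{R}_{\tau,\tau}(\gamma)\langle h(\tau),h(\tau)\rangle=R(\gamma(\tau))\langle\widetilde{h}(\gamma;\tau),\widetilde{h}(\gamma;\tau),\dot{\gamma}(\tau)\rangle$ together with a triangular term involving $\dot{h}(\tau)$. After summing over $\mu$ and using weak uniform density of $\{e_n\}$, the diagonal term yields a Ricci-type integral against $\nabla\widetilde{D}\varphi(\gamma)$ whose vanishing is not obvious, and the triangular $\dot h$-term is not of the antisymmetric form permitted for singular kernels in $T^2_{AGV}$, so neither contribution is disposed of by Proposition~\ref{p} or by your Hilbert--Schmidt estimate (which handles only the genuine $L_2([0,1]^2)$ double-integral part). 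In other words, ``the only nonzero kernel is the Volterra kernel'' is precisely what needs proof. The argument does close cleanly if you additionally assume, as in the preceding theorem, that $\varphi$ has an $H^0$-gradient: then the correction equals $G_0(\mathrm{grad}_{H^0}\varphi(\gamma),\nabla^{\mathcal{H}^0}_{\widetilde{h}}\widetilde{h}(\gamma))$, a double integral against an $L_2$ kernel supported on $\{\sigma\le\tau\}$ with no diagonal or derivative terms, and your final Ces\`aro/Hilbert--Schmidt step applies. You should either add that hypothesis or supply the missing cancellation of the diagonal and $\dot h$ contributions.
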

\begin{proof}
For   any $\gamma\in \mathcal{P}ath_m$  and any $h\in H^1_{0,0}$ hold
$
\frac {d}{ds}\varphi(\mathrm{Exp}_\gamma(s\widetilde{h}))=\langle\widetilde{D}\varphi(\mathrm{Exp}_\gamma(s\widetilde{h}),h\rangle
$ and
$$\frac {d^2}{ds^2}\varphi(\mathrm{Exp}_\gamma(s\widetilde{h}))=\langle d_{\widetilde{h}}\widetilde{D}\varphi(\mathrm{Exp}_\gamma(s\widetilde{h})),h\rangle=\langle\widetilde{D}^2\varphi(\mathrm{Exp}_\gamma(s\widetilde{h}))h,h\rangle.$$
Hence, $\left.\frac {d^2}{ds^2}\right|_{s=0}\varphi(\mathrm{Exp}_\gamma(s\widetilde{h}))=\langle\widetilde{D}^2\varphi(\gamma)h,h\rangle.$
Then Proposition~\ref{p} implies that
$$
\Delta^{\{e_{\mu,n}\}}_L\varphi(\gamma)=\lim_{n\to\infty}\frac 1n\sum_{k=1}^n\sum_{\mu=1}^d\langle\widetilde{D}^2\varphi(\gamma)e_{\mu,k},e_{\mu,k}\rangle=\mathrm{tr}^{AGV}_L(\widetilde{D}^2\varphi(\gamma))=\Delta_{L}^{AGV}\varphi(\gamma).
$$
\end{proof}
\section{ L\'evy Laplacian and differential forms}
\label{Secfunctionals}
In this section, using eigenforms for the Hodge-de Rham Laplacian we construct some eigenfunctions of the L\'evy Laplacian operator.

Bellow $M$ is a compact orientable Riemannian manifold without boundary.
Let $\ast\colon \Omega^p(M)\to \Omega^{d-p}(M)$ be the Hodge star on the manifold $M$. Let $\mathrm{d}\colon \Omega^p(M)\to \Omega^{p+1}(M)$ be the operator of the exterior derivative and  $\delta\colon \Omega^{p+1}(M)\to \Omega^{p}(M)$ be the codifferential. So 
$\delta = (-1)^{d(p + 1) + 1} \ast \mathrm{d} \ast$.
Let
$$
\Delta=- \mathrm{d}\delta-\delta\mathrm{d} =-(\mathrm{d}+\delta)^2
$$
be the Hodge–de Rham Laplacian (we choose the sign of the operator so that it is negative semidefinite on a compact manifold, despite the fact that 
in the literature on geometry the opposite sign is often chosen~\cite{Jost}). For $p=0$ this Laplacian coincides with the Laplace-Beltrami operator. A $p$-form $\omega$ is harmonic if $\Delta \omega=0$. Let $\mathcal{H}_\Delta^p(M) = \{\omega\in\Omega^p(M)\mid\Delta\omega=0\}$ be the space of harmonic $p$-forms.

Let $\mathfrak{f}\in C^\infty(M,\mathbb{R})$. Let
$\mathfrak{L_{f}}\colon \mathcal{P}ath\to \mathbb{\mathbb{R}}$
be defined by
\begin{equation*}
\mathfrak{L_{f}}(\gamma)=\int_0^1\mathfrak{f}(\gamma(\tau))d\tau.
\end{equation*}
\begin{proposition}
\label{forfunct}
It holds  that
\begin{equation}
\label{laplLf}
 \Delta_L\mathfrak{L_{f}}(\gamma)=\int_0^1 \Delta \mathfrak{f}(\gamma(\tau))d\tau,
\end{equation}
where $\Delta$ is the Laplace-Beltrami operator on the manifold $M$. 
\end{proposition}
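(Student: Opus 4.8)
The plan is to unwind Definition~\ref{LevyLapl}, $\Delta_L=\mathrm{div}_L\circ\mathrm{grad}_{H^0}$, in two steps. First I would identify the $H^0$-gradient of $\mathfrak{L_f}$. For any local section $X$ in $\mathcal{H}^1_{0,0}$ the directional derivative is computed pointwise along the variation of $\gamma$,
\begin{equation*}
d_X\mathfrak{L_f}(\gamma)=\int_0^1 d\mathfrak{f}(\gamma(\tau))[X(\gamma;\tau)]\,d\tau=\int_0^1 g\big(\mathrm{grad}\,\mathfrak{f}(\gamma(\tau)),X(\gamma;\tau)\big)\,d\tau,
\end{equation*}
where $\mathrm{grad}\,\mathfrak{f}$ denotes the metric gradient on $M$. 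Comparing with Definition~\ref{H_0grad} and the form of $G_0$, and using that $\mathcal{H}^1_{0,0}$ sections are dense in the fibers of $\mathcal{H}^0$, I read off the unique $H^0$-gradient
\begin{equation*}
\psi(\gamma;\tau):=\mathrm{grad}_{H^0}\mathfrak{L_f}(\gamma;\tau)=\mathrm{grad}\,\mathfrak{f}(\gamma(\tau)),
\end{equation*}
a smooth global section of $\mathcal{H}^0$ since $\mathfrak{f}$ is smooth.

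Second, I would compute the covariant derivative $\nabla^{\mathcal H^0}_X\psi$ and extract its AGV kernels. The decisive feature is that $\psi(\gamma;\tau)$ depends on $\gamma$ only through the single point $\gamma(\tau)$; hence in the normal chart at $\sigma(\tau)$ the variation $d_X\psi(\gamma;\tau)$ is just the ordinary derivative of the components of $\mathrm{grad}\,\mathfrak{f}$ along $X(\gamma;\tau)$, and adding the Christoffel term in~(\ref{covdev}) reconstitutes the Levi-Civita derivative on $M$,
\begin{equation*}
\nabla^{\mathcal H^0}_X\psi(\gamma;\tau)=\nabla^{M}_{X(\gamma;\tau)}\mathrm{grad}\,\mathfrak{f}(\gamma(\tau)).
\end{equation*}
Therefore
\begin{equation*}
G_0\big(\nabla^{\mathcal H^0}_X\psi(\gamma),Y(\gamma)\big)=\int_0^1 \operatorname{Hess}\mathfrak{f}(\gamma(\tau))\big(X(\gamma;\tau),Y(\gamma;\tau)\big)\,d\tau,
\end{equation*}
with $\operatorname{Hess}\mathfrak{f}(X,Y)=g(\nabla^{M}_X\mathrm{grad}\,\mathfrak{f},Y)$.

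Comparing this bilinear functional with the AGV form~(\ref{AGVtensors}), there is neither a double integral nor a derivative term, so the Volterra and singular kernels vanish, $K^V_\psi=0$ and $K^S_\psi=0$, while the L\'evy kernel is the Hessian, $K^L_\psi(\gamma;\tau)=\operatorname{Hess}\mathfrak{f}(\gamma(\tau))$, a smooth symmetric $(0,2)$-section along $\gamma$, i.e. a section of $\mathcal R_1$. Thus $\psi\in\mathrm{dom}\,\mathrm{div}_L$ and
\begin{equation*}
\Delta_L\mathfrak{L_f}(\gamma)=\mathrm{div}_L\psi(\gamma)=\int_0^1\operatorname{tr}_g\big(\operatorname{Hess}\mathfrak{f}(\gamma(\tau))\big)\,d\tau.
\end{equation*}
The claim follows upon recognizing $\operatorname{tr}_g\operatorname{Hess}\mathfrak{f}=g^{\mu\nu}\nabla_\mu\nabla_\nu\mathfrak{f}=\Delta\mathfrak{f}$, where $\Delta=-\delta\mathrm{d}$ on functions is the Laplace--Beltrami operator in the negative semidefinite sign convention fixed at the start of Section~\ref{Secfunctionals}.

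The main point requiring care is the vanishing of the Volterra and singular kernels. This rests on the strict locality of $\psi$: unlike the situation in~(\ref{dhh}), where differentiating a field parallel-transported over an interval produces a curvature contribution that lives off the diagonal $\tau_1=\tau_2$, here $\psi(\gamma;\tau)$ samples $\gamma$ at the single parameter $\tau$, so its variation is purely diagonal and of L\'evy type, leaving no nonlocal or singular part. I would also verify that $\gamma\mapsto\operatorname{Hess}\mathfrak{f}(\gamma(\cdot))$ is a smooth section of $\mathcal R_1$ — immediate from smoothness of $\mathfrak{f}$ and of the evaluation maps — and double-check the sign in the final trace identification against the convention of Section~\ref{Secfunctionals}.
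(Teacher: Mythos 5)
Your proposal is correct and follows essentially the same route as the paper's proof: identify $\mathrm{grad}_{H^0}\mathfrak{L_f}(\gamma;\tau)=\operatorname{grad}\mathfrak{f}(\gamma(\tau))$, compute $G_0(\nabla^{\mathcal H^0}_X\mathrm{grad}_{H^0}\mathfrak{L_f},Y)=\int_0^1\operatorname{Hess}\mathfrak{f}(X,Y)\,d\tau$, observe that the Volterra and singular kernels vanish, and take the metric trace of the L\'evy kernel. Your added remarks on locality and the sign convention are sound elaborations of what the paper leaves as ``direct calculations.''
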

\begin{proof}
The $H^0$-gradient of the function $\mathfrak{L_{f}}$ has the form 
$$
\mathrm{grad}_{H^0}\mathfrak{L_{f}}(\gamma;\tau)=\operatorname{grad}\mathfrak{f}(\gamma(\tau)),
$$
where $\operatorname{grad}$ is the gradient on the Riemannian  manifold $M$. By direct calculations one can obtain
\begin{multline}
\label{L_f}
G_0(\nabla^{\mathcal
H^0}_{X}\mathrm{grad}_{H^0}\mathfrak{L_{f}}(\gamma), Y(\gamma))=\\
=\int_0^1g(\nabla_{X(\gamma,\tau)} \operatorname{grad} f(\gamma(\tau)),Y(\gamma,\tau))d\tau
=\int_0^1
\operatorname{Hess}f(\gamma(\tau))(X(\gamma,\tau), Y(\gamma,\tau))d\tau,
\end{multline}
where $\operatorname{Hess}$ is the Hessian tensor.
The functional $\mathfrak{L_{f}}$ belongs to the domain of the L\'evy Laplacian $\Delta_L$ and the  Volterra part and the singular part in~(\ref{L_f}) vanish. This implies~(\ref{laplLf}).
\end{proof}

If $a\in \Omega^1(M)$, 
let the functional $\Theta_a\colon \mathcal{P}ath\to \mathbb{R}$ be defined by
$$
\Theta_a(\gamma)=\int_{\gamma}a=\int_0^1a(\gamma(\tau))\dot{\gamma}(\tau)d\tau.
$$
Note  that  if $b\in \Omega^1(M)$ and $a-b$ is an exact 1-form then 
$
\Theta_a(\gamma)=\Theta_b(\gamma)
$
for any $\gamma\in \mathcal{L}oop$. If $a$ is a closed form, then $\Theta_a$ takes the same value on homotopy equivalent curves from $\mathcal{L}oop$.
Let the functional $U^a\colon  \mathcal{P}ath\to \mathbb{C}$ be defined by
$$
U^a(\gamma)=e^{-i\Theta_a(\gamma)}.
$$
The 1-form $ia$ can be interpreted as  a $U(1)$-connection in a trivial bundle over $M$ and the functional $U^a$ can be  interpreted as a parallel transport generated by this connection. 
\begin{proposition}
\label{Phaseprop1}
If $a\in \Omega^1(M)$ then the functional  $\Theta_a$ on  $\mathcal{P}ath$ belong to the domain of the L\'evy Laplacian and
\begin{equation}
\label{LLofPhase}
\Delta_L\Theta_a(\gamma)=-\int_{\gamma}\delta \mathrm{d}a.
\end{equation}
On the loop space $\mathcal{L}oop$ it holds
\begin{equation}
\label{LoopLLofPhase}
\Delta_L\Theta_a(\gamma)=\int_{\gamma}\Delta a.
\end{equation}
\end{proposition}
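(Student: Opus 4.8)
The plan is to unfold $\Delta_L=\mathrm{div}_L\circ\mathrm{grad}_{H^0}$ exactly as in the proof of Proposition~\ref{forfunct}, but now keeping track of all three AGV kernels, since $\mathrm{grad}_{H^0}\Theta_a$ depends on the velocity $\dot\gamma$ in addition to the base point.

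First I would compute the directional derivative along a section $X$ in $\mathcal{H}^1_{0,0}$ (so $X$ vanishes at the endpoints). Differentiating $\Theta_a(\gamma_s)=\int_0^1 a(\dot\gamma_s)\,d\tau$ and using torsion-freeness of the Levi-Civita connection to write the variation of the velocity as $\nabla X$ gives
$$d_X\Theta_a(\gamma)=\int_0^1\big[(\nabla_X a)(\dot\gamma)+a(\nabla X)\big]\,d\tau.$$
Integrating the second term by parts (the boundary term vanishes because $X(\gamma;0)=X(\gamma;1)=0$) and invoking the intrinsic identity $\mathrm{d}a(U,V)=(\nabla_U a)(V)-(\nabla_V a)(U)$ collapses this to $d_X\Theta_a(\gamma)=\int_0^1\mathrm{d}a(X,\dot\gamma)\,d\tau$. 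Since $\mathrm{d}a(X,\dot\gamma)=-g((\iota_{\dot\gamma}\mathrm{d}a)^\sharp,X)$, this identifies the pointwise $H^0$-gradient $\mathrm{grad}_{H^0}\Theta_a(\gamma;\tau)=-(\iota_{\dot\gamma}\mathrm{d}a)^\sharp(\gamma(\tau))$, a smooth section of $\mathcal{H}^0$; in particular $\Theta_a$ has an $H^0$-gradient.

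Next I would compute the covariant Hessian $G_0(\nabla^{\mathcal H^0}_{X}\mathrm{grad}_{H^0}\Theta_a,Y)$ and display it as an AGV tensor. Writing $\mathrm{grad}_{H^0}\Theta_a=A(\dot\gamma)$, where $A$ is the $(1,1)$-tensor metrically dual to the $2$-form $\mathrm{d}a$ (so that $g(Au,v)=\mathrm{d}a(v,u)$), and differentiating produces two pieces: a pointwise-diagonal term $g((\nabla_X A)\dot\gamma,Y)=(\nabla_X\mathrm{d}a)(Y,\dot\gamma)$, and a term $g(A\nabla X,Y)=\mathrm{d}a(Y,\nabla X)$ carrying $\nabla X$. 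I would symmetrize the second by one integration by parts in $\tau$, which yields the symmetric singular pair $\tfrac12\mathrm{d}a(Y,\nabla X)+\tfrac12\mathrm{d}a(X,\nabla Y)$ plus an extra diagonal term $-\tfrac12(\nabla_{\dot\gamma}\mathrm{d}a)(Y,X)$. Hence there is no Volterra part, the singular kernel is $K^S=-\mathrm{d}a$, and the Lévy kernel is
$$K^L(\gamma;\tau)\langle X,Y\rangle=(\nabla_X\mathrm{d}a)(Y,\dot\gamma)-\tfrac12(\nabla_{\dot\gamma}\mathrm{d}a)(Y,X).$$
Here I expect the main subtlety. For membership in $\mathrm{dom}\,\mathrm{div}_L$ the Lévy kernel must be a genuinely symmetric $(0,2)$-tensor, since an antisymmetric diagonal part cannot be absorbed into the $L_2$ Volterra kernel. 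I would check this symmetry via the Bianchi-type identity $\nabla_{[\rho}(\mathrm{d}a)_{\sigma\nu]}=0$, valid precisely because $\mathrm{d}a$ is closed; the closedness of $\mathrm{d}a$ is exactly what makes the correction $-\tfrac12(\nabla_{\dot\gamma}\mathrm{d}a)$ restore the symmetry of $K^L$. This establishes $\mathrm{grad}_{H^0}\Theta_a\in\mathrm{dom}\,\mathrm{div}_L$, i.e. $\Theta_a\in\mathrm{dom}\,\Delta_L$.

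Finally I would take the metric trace. Under $\operatorname{tr}_g$ the correction term drops out, because $g^{\rho\sigma}(\mathrm{d}a)_{\sigma\rho}=0$ by antisymmetry, leaving $\operatorname{tr}_g K^L=\nabla^{\mu}(\mathrm{d}a)_{\mu\nu}\dot\gamma^\nu=-(\delta\mathrm{d}a)(\dot\gamma)$ from the coordinate expression of the codifferential. Integrating over $\tau$ gives~(\ref{LLofPhase}). For~(\ref{LoopLLofPhase}) I would write $\Delta a=-\mathrm{d}\delta a-\delta\mathrm{d}a$ and note that $\delta a$ is a function, so $\int_\gamma\mathrm{d}(\delta a)=\delta a(\gamma(1))-\delta a(\gamma(0))=0$ on $\mathcal{L}oop$; hence $\int_\gamma\Delta a=-\int_\gamma\delta\mathrm{d}a=\Delta_L\Theta_a(\gamma)$.
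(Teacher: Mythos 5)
Your proposal is correct and follows essentially the same route as the paper: identify $\mathrm{grad}_{H^0}\Theta_a$ as the vector field dual to $\iota_{\dot\gamma}\mathrm{d}a$ via integration by parts, decompose the covariant Hessian into AGV form with vanishing Volterra kernel and singular kernel $-\mathrm{d}a$, use the Bianchi identity for the closed form $\mathrm{d}a$ to obtain the symmetric L\'evy kernel $\tfrac12\bigl(\nabla_\lambda f_{\mu\nu}+\nabla_\mu f_{\lambda\nu}\bigr)\dot\gamma^\nu$, take the metric trace to get $-\delta\mathrm{d}a$, and drop the exact term $\mathrm{d}\delta a$ on loops. Your explicit justification of why the antisymmetric diagonal correction must cancel (required for membership in $C^{\infty}_{AGV}$, where the L\'evy kernel is a section of $\hat{\otimes}^2T^*M$) is a welcome elaboration of the paper's terser ``direct calculations, integration by parts and the Bianchi identities.''
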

\begin{proof}
Let
$f=\mathrm{d}a$, i.e. $f=\sum_{\mu<\nu}f_{\mu\nu}dx^\mu\wedge dx^\nu$, where $f_{\mu\nu}=\partial_\mu a_\nu-\partial_\nu a_\mu$.
By direct calculations and by integrating by parts, taking into account $X(\gamma,0)=X(\gamma,1)=0$, we can obtain
$$
G_0(\mathrm{grad}_{H^0}\Theta_a(\gamma),X(\gamma)
)=\int_0^1f(\gamma(\tau))\langle X(\gamma,\tau), \dot{\gamma}(\tau)\rangle d\tau.
$$
Also, using direct calculations, integration by parts and the  Bianchi identities, one can obtain (cf.~\cite{Volkov2019a})
\begin{multline}
\label{ThetaVolterr}
G_0(\nabla^{\mathcal
H^0}_{X}\mathrm{grad}_{H^0}\Theta_a(\gamma),Y(\gamma))
=\frac 12\int_0^1\nabla f(\gamma(\tau))\langle
X(\gamma;\tau), Y(\gamma;\tau),\dot{\gamma}(\tau)\rangle d\tau+
\\+\frac 12\int_0^1\nabla f(\gamma(\tau))\langle
Y(\gamma;\tau),X(\gamma;\tau),\dot{\gamma}(\tau)\rangle d\tau-\\
-\frac 12\int_0^1f(\gamma(\tau))\langle\nabla
X(\gamma,\tau), Y(\gamma;\tau)\rangle d\tau-\frac 12\int_0^1 f(\gamma(\tau))\langle \nabla
Y(\gamma;\tau), X(\gamma;\tau)\rangle d\tau,
\end{multline}
where $\nabla f$ is a covariant derivative of $f$ due to the Levi-Civita connection on the manifold $M$. Then $\Theta_a$ belongs to the domain of the L\'evy Laplacian. Note that the Volterra kernel in~(\ref{ThetaVolterr}) vanishes.
In local coordinates,
$$
(\delta f)_\nu=-\nabla^\mu f_{\mu\nu}=-g^{\lambda \mu}(\partial_\lambda f_{\mu \nu}-f_{\mu
\kappa}\Gamma^\kappa_{\lambda\nu}-f_{\kappa\nu}\Gamma^\kappa_{\lambda\mu}).
$$
This implies~(\ref{LLofPhase}).
Due to $\mathrm{d}\delta a$ is an exact 1-form, on the loop manifold $\mathcal{L}oop$ we have
\begin{equation*}
\Delta_L\Theta_a(\gamma)=-\int_{\gamma}\delta \mathrm{d} a-\underbrace{\int_{\gamma}\mathrm{d} \delta a}_{=0}=\int_\gamma \Delta a.
\end{equation*}
\end{proof}

\begin{proposition}
The functional  $U^a$ on  $\mathcal{P}ath$ belongs to the domain of the L\'evy Laplacian
\begin{equation*}
\label{transport_Laplacian1}
\Delta_LU^a(\gamma)=-iU^a(\gamma)\int_{\gamma}\delta \mathrm{d}a.
\end{equation*}
On the loop space $\mathcal{L}oop$ it holds
\begin{equation*}
\label{transport_Laplacian}
\Delta_LU^a(\gamma)=ie^{-i\Theta_a(\gamma)}\int_\gamma \Delta a.
\end{equation*}
\end{proposition}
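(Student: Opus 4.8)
The plan is to realize $U^a$ as a composition $U^a=\mathcal{F}\circ\Theta_a$ with the smooth scalar function $\mathcal{F}(x)=e^{-ix}$ and to push the computation through the first-order chain rule for the L\'evy Laplacian. By Proposition~\ref{Phaseprop1} the phase $\Theta_a$ already belongs to $\mathrm{dom}\,\Delta_L$ and its Laplacian is known on both $\mathcal{P}ath$ and $\mathcal{L}oop$, so once the chain rule is applied the result is just a substitution of facts proved for $\Theta_a$.

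The one genuine point is that Proposition~\ref{ChainRule} is stated for real-valued $\mathcal{F}$, whereas here $\mathcal{F}$ is complex. I would handle this by writing $U^a=\cos\Theta_a-i\sin\Theta_a$ and applying Proposition~\ref{ChainRule} separately to $\cos\Theta_a$ and $\sin\Theta_a$; each application simultaneously certifies that these functionals lie in $\mathrm{dom}\,\Delta_L$ (via the $H^0$-gradient together with Proposition~\ref{Propdivergence}) and evaluates their Laplacians. Using the $\mathbb{R}$-linearity of $\mathrm{grad}_{H^0}$, $\mathrm{div}_L$ and $\Delta_L$, I would then recombine the two pieces into the single derivation identity $\Delta_L U^a=\mathcal{F}'(\Theta_a)\,\Delta_L\Theta_a=-i\,U^a\,\Delta_L\Theta_a$. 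It is worth noting that no term proportional to $\mathcal{F}''$ survives: such a term enters only the Volterra kernel of $K_{\mathrm{grad}_{H^0}U^a}$, whose L\'evy kernel vanishes, which is exactly the mechanism behind the first-order behaviour recorded in Proposition~\ref{Leibnizproductrule}.

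It then remains to substitute the values of $\Delta_L\Theta_a$. On $\mathcal{P}ath$ I would insert $\Delta_L\Theta_a=-\int_\gamma\delta\mathrm{d}a$ from~(\ref{LLofPhase}), keeping careful track of the sign, to obtain the first displayed identity. On $\mathcal{L}oop$ I would instead insert $\Delta_L\Theta_a=\int_\gamma\Delta a$ from~(\ref{LoopLLofPhase}), which is legitimate because $\int_\gamma\mathrm{d}\delta a=0$ on closed curves, giving the second identity. The main obstacle is thus the passage from the real-valued chain rule to the complex exponential; it is resolved cleanly by the real/imaginary splitting above, after which the statement follows by direct substitution of the formulas already established for $\Theta_a$.
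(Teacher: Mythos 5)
Your proof is correct and is essentially identical to the paper's, whose entire argument reads ``Follows directly from Proposition~\ref{ChainRule} and Proposition~\ref{Phaseprop1}''; your real/imaginary splitting of $e^{-ix}$ is a sensible way to make the application of the real-valued chain rule to a complex $\mathcal{F}$ rigorous, and the observation that no $\mathcal{F}''$-term contributes to the L\'evy kernel is exactly the mechanism the paper relies on. One caveat: substituting $\Delta_L\Theta_a=-\int_\gamma\delta\mathrm{d}a$ into $\Delta_LU^a=-iU^a\,\Delta_L\Theta_a$ actually gives $+iU^a\int_\gamma\delta\mathrm{d}a$, i.e.\ the opposite sign to the first displayed identity (and likewise for the second), so the sign discrepancy lies in the paper's statement of the proposition rather than in your method --- worth flagging, since you claim careful sign-tracking reproduces the displayed formula.
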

\begin{proof}
Follows directly from Proposition~\ref{ChainRule}  and Proposition~\ref{Phaseprop1}.
\end{proof}

Following the observation from~\cite{Accardi}, we
 can construct a family of eigenfunctions for the L\'evy Laplacian. For eigenvalues and eigenvectors of
 the Hodge–de Rham Laplacian on a compact Riemannian manifold
see e.g.~\cite{Chavel}.

\begin{proposition}
Let smooth functions $\mathfrak{f}_1,\ldots\mathfrak{f}_{N_1}$ on $M$ be  eigenfunctions of the Laplace-Beltrami operator corresponding to eigenvalues $\lambda_1,\ldots,\lambda_{N_1}$ respectively.
Let smooth 1-forms $a_1,\ldots,a_{N_2}$ on $M$
 be  eigenforms of the Hodge–de Rham Laplacian  corresponding to eigenvalues $\mu_1,\ldots,\mu_{N_2}$ respectively. Then, the functional $F:\mathcal{L}oop \to \mathbb{R}$ defined by
$$
F(\gamma)=\left(\prod_{j=1}^{N_1}\mathfrak{L}_{\mathfrak{f}_j}(\gamma)\right)\left(\prod_{k=1}^{N_2}\Theta_{a_k}(\gamma)\right)
$$
is an eigenfunction of the L\'evy Laplacian corresponding to the eigenvalue 
$(\sum_{j=1}^{N_1}\lambda_j+\sum_{k=1}^{N_2}\mu_k)$.
\end{proposition}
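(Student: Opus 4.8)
The plan is to reduce the statement to the two cases already computed, namely Proposition~\ref{forfunct} for the scalar factors and Proposition~\ref{Phaseprop1} for the $1$-form factors, and then to assemble the product in one stroke using the first-order Leibniz behaviour of the L\'evy Laplacian recorded in Proposition~\ref{ChainRule} and Proposition~\ref{Leibnizproductrule}.

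First I would check that each factor is itself an eigenfunction of $\Delta_L$. For the scalar factors, Proposition~\ref{forfunct} gives $\Delta_L\mathfrak{L}_{\mathfrak{f}_j}(\gamma)=\int_0^1\Delta\mathfrak{f}_j(\gamma(\tau))\,d\tau$; substituting the eigenvalue equation $\Delta\mathfrak{f}_j=\lambda_j\mathfrak{f}_j$ yields $\Delta_L\mathfrak{L}_{\mathfrak{f}_j}=\lambda_j\mathfrak{L}_{\mathfrak{f}_j}$. For the $1$-form factors I would invoke the loop-space formula~(\ref{LoopLLofPhase}): on $\mathcal{L}oop$ one has $\Delta_L\Theta_{a_k}(\gamma)=\int_\gamma\Delta a_k$, and substituting $\Delta a_k=\mu_k a_k$ gives $\Delta_L\Theta_{a_k}=\mu_k\Theta_{a_k}$. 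The restriction to the loop manifold is essential here: on $\mathcal{P}ath$ only the weaker formula~(\ref{LLofPhase}) holds, and the exact term $\mathrm{d}\delta a_k$ does not vanish; over a closed curve, however, $\int_\gamma\mathrm{d}\delta a_k=0$, so the full Hodge–de Rham eigenvalue $\mu_k$ is recovered.

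With each factor in the domain of $\Delta_L$ and an eigenfunction, I would finish via Proposition~\ref{ChainRule}. Setting $N=N_1+N_2$, write $\Phi=(\mathfrak{L}_{\mathfrak{f}_1},\ldots,\mathfrak{L}_{\mathfrak{f}_{N_1}},\Theta_{a_1},\ldots,\Theta_{a_{N_2}})\colon\mathcal{L}oop\to\mathbb{R}^{N}$ and take $\mathcal{F}(x_1,\ldots,x_N)=\prod_{i=1}^N x_i$, so that $F=\mathcal{F}\circ\Phi$. Since $\partial\mathcal{F}/\partial x_i=\prod_{i'\neq i}x_{i'}$, formula~(\ref{DeltaF}) gives $\Delta_L F=\sum_{i=1}^N\bigl(\prod_{i'\neq i}\Phi_{i'}\bigr)\,\Delta_L\Phi_i$. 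Replacing each $\Delta_L\Phi_i$ by the corresponding eigenvalue times $\Phi_i$ and using $\bigl(\prod_{i'\neq i}\Phi_{i'}\bigr)\Phi_i=F$, every summand becomes a constant multiple of $F$, and the constants sum to $\sum_{j}\lambda_j+\sum_k\mu_k$, which is exactly the claimed eigenvalue. Equivalently, one may argue by induction on the number of factors using the product rule of Proposition~\ref{Leibnizproductrule}.

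I do not anticipate a genuine obstacle: the entire argument is powered by the fact that $\Delta_L$ obeys a first-order Leibniz rule, which is what forces the eigenvalues of a product to add rather than to combine through Volterra or singular cross terms. The only point demanding care is the passage from $\mathcal{P}ath$ to $\mathcal{L}oop$ for the $\Theta_{a_k}$, ensuring the exact contribution drops out so that the genuine Hodge eigenvalue $\mu_k$ appears.
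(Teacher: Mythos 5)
Your proposal is correct and follows essentially the same route as the paper: the paper's proof is a one-line citation of Proposition~\ref{forfunct}, formula~(\ref{LoopLLofPhase}) from Proposition~\ref{Phaseprop1}, and the Leibniz product rule (Proposition~\ref{Leibnizproductrule}), which are exactly the three ingredients you assemble. Your additional remarks---that the exact term $\mathrm{d}\delta a_k$ integrates to zero over closed curves, and the explicit chain-rule computation with $\mathcal{F}(x_1,\ldots,x_N)=\prod_i x_i$---merely spell out details the paper leaves implicit.
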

\begin{proof}
It follows directly from the Leibniz product rule  for the L\'evy Laplacian (Proposition~\ref{Leibnizproductrule}), Proposition~\ref{forfunct} and formula~(\ref{LoopLLofPhase}) from Proposition~\ref{Phaseprop1}.
\end{proof}

\section{Heat equations with  L\'evy Laplacian and with Hodge-de Rham Laplacian}
\label{SecHeat}

In this section, it is shown that the heat flows of differential 0-forms and 1-forms generate solutions of the heat equation with the L\'evy Laplacian. These solutions tend to locally constant functionals for the L\'evy Laplacian as time tends to infinity.

Classical Hodge theorem states that there exists $L_2$-decomposition for $\Omega^p(M)$ ($p\in \{0,\ldots, d\}$)
$$
\Omega^p(M)=\mathrm{d}\Omega^{p-1}(M)\oplus\mathcal{H}_\Delta^p(M)\oplus\delta \Omega^{p+1}(M). 
$$
So any $\omega\in \Omega^p(M)$ can be decomposed
$$
\omega=\mathrm{d}\alpha+h+\delta\beta,
$$
where $\alpha\in \Omega^{p-1}(M)$, $\beta\in \Omega^{p+1}(M)$ and $h\in \mathcal{H}_\Delta^p(M)$. The
differential forms $\mathrm{d}\alpha$, $h$, and $\delta\beta$ are uniquely defined.

Consider the initial value problem for the heat equation for differential $p$-form $\omega=\omega(t)=\omega(t,x)$ (we assume $\omega$ is $C^{\infty}$ with respect to $(t,x)$):
\begin{equation}
\label{1_form_heat}
\begin{cases}
\frac{\partial \omega}{\partial t}=\Delta \omega;\\
\omega(0)=\omega_0\in \Omega^p(M).
\end{cases}
\end{equation}
Milgram--Rosenbloom~
theorem~\cite{MilgramRosenbloom,Chavel,Jost} states that the  unique solution of~(\ref{1_form_heat}) exists for all $t>0$ and converges in $C^\infty$ on $M$ to the harmonic form $h_0$ as $t$ tends to infinity. This harmonic form $h_0$ is an orthogonal projection of $\omega_0$ on $\mathcal{H}_\Delta^p(M)$.

\begin{theorem}
Let  $\mathfrak{f}_1,\ldots\mathfrak{f}_{N_1}$  and  $a_1,\ldots,a_{N_2}$ be smooth solutions of the heat equations for $0$- and $1$-forms on $M$ respectively. 
Let  $\Phi=(\Phi_1,\ldots,\Phi_{N})\colon [0,+\infty)\times \mathcal{L}oop\to\mathbb{R}^{N}$, where $N=N_1+N_2$, such that
\begin{equation}
\label{functmain}
\Phi_j(t,\gamma)=\begin{cases}
\mathfrak{L_{f_j}}(t,\gamma)=\int_0^1\mathfrak{f}_j(t,\gamma(\tau))d\tau
,&\text{if $j\in\{1,\ldots,N_1\}$,}\\
\Theta_{a_{(j-N_1)}}(t,\gamma)=\int_{\gamma}a_{(j-N_1)}(t)
,&\text{if $j\in\{N_1+1,\ldots,N\}$}.
\end{cases}
\end{equation}
Let $\mathcal{F}\in C^{\infty}(\mathbb{R}^N,\mathbb{R})$ and $F(t,\gamma)=\mathcal{F}(\Phi(t,\gamma))$. 
Then $F(\cdot,\cdot)$ is a solution of the heat equation for the L\'evy Laplacian
$
\partial_t F=\Delta_LF.
$
This solution $F(t,\cdot)$ pointwise converges to a locally constant functional on $\mathcal{L}oop$ as  $t$ tends to infinity.  This functional takes the same value on all homotopy equivalent curves.
\end{theorem}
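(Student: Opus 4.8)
The plan is to reduce the whole statement to a single componentwise identity and then compose with $\mathcal{F}$. For the heat equation, note that by Proposition~\ref{ChainRule} we have $\Delta_L F(t,\gamma)=(\nabla\mathcal{F}(\Phi(t,\gamma)),\Delta_L\Phi(t,\gamma))_{\mathbb{R}^N}$, while the ordinary chain rule in the $t$-variable gives $\partial_t F(t,\gamma)=(\nabla\mathcal{F}(\Phi(t,\gamma)),\partial_t\Phi(t,\gamma))_{\mathbb{R}^N}$. Hence it suffices to verify $\partial_t\Phi_j=\Delta_L\Phi_j$ for each index $j$ separately. First I would treat $j\in\{1,\dots,N_1\}$: differentiating under the integral sign and using the heat equation $\partial_t\mathfrak{f}_j=\Delta\mathfrak{f}_j$ gives $\partial_t\Phi_j(t,\gamma)=\int_0^1\Delta\mathfrak{f}_j(t,\gamma(\tau))\,d\tau$, which is exactly $\Delta_L\Phi_j$ by Proposition~\ref{forfunct}. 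For $j\in\{N_1+1,\dots,N\}$, writing $k=j-N_1$ and differentiating $\int_\gamma a_k(t)$ under the integral, the heat equation $\partial_t a_k=\Delta a_k$ yields $\partial_t\Phi_j(t,\gamma)=\int_\gamma\Delta a_k(t)$, which matches $\Delta_L\Theta_{a_k}$ on $\mathcal{L}oop$ by formula~(\ref{LoopLLofPhase}) of Proposition~\ref{Phaseprop1}. Assembling the components then gives $\partial_t F=\Delta_L F$.

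For the long-time behavior I would invoke the Milgram--Rosenbloom theorem: each $\mathfrak{f}_j(t)$ converges in $C^\infty(M)$ to a harmonic $0$-form, which on the compact connected manifold $M$ is a constant $c_j$, and each $a_k(t)$ converges in $C^\infty$ to a harmonic $1$-form $h_k$. Because this convergence is uniform, I can pass to the limit inside the integrals defining the $\Phi_j$: for $j\le N_1$ one obtains $\Phi_j(t,\gamma)\to c_j$, a value independent of $\gamma$, and for $j>N_1$ one obtains $\Phi_j(t,\gamma)\to\int_\gamma h_k$. Since a harmonic form on a compact manifold is closed, the observation recorded just before Proposition~\ref{Phaseprop1} shows that $\gamma\mapsto\int_\gamma h_k$ takes the same value on homotopy equivalent loops. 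Continuity of $\mathcal{F}$ then yields pointwise convergence of $F(t,\cdot)$ to $\mathcal{F}(c_1,\dots,c_{N_1},\int_\gamma h_1,\dots,\int_\gamma h_{N_2})$, a functional depending only on the homotopy class of $\gamma$ and therefore locally constant.

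The routine part consists of the differentiation-under-the-integral and limit-interchange steps, which are justified by the joint smoothness of $\mathfrak{f}_j$ and $a_k$ in $(t,x)$, the compactness of $M$, and the $C^\infty$ (hence uniform) convergence supplied by Milgram--Rosenbloom; the factor $\dot\gamma\in H^0$ in the definition of $\Theta_{a_k}$ causes no difficulty, since the one-form coefficients converge uniformly while $\dot\gamma$ is fixed. The conceptual heart of the argument, and the only place where the structure of the problem genuinely enters, is the componentwise identity $\partial_t\Phi_j=\Delta_L\Phi_j$: this is precisely where the heat equation for $0$- and $1$-forms on $M$ is converted, through Propositions~\ref{forfunct} and~\ref{Phaseprop1}, into the heat equation for the L\'evy Laplacian on loop space. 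I expect no serious obstacle beyond bookkeeping; the one point requiring care is that formula~(\ref{LoopLLofPhase}) holds only on $\mathcal{L}oop$, where the exact term $\int_\gamma\mathrm{d}\delta a$ vanishes, which is exactly why the statement is formulated on the loop space rather than on all of $\mathcal{P}ath$.
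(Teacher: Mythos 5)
Your proposal is correct and follows essentially the same route as the paper's proof: reduce to the componentwise identity $\partial_t\Phi_j=\Delta_L\Phi_j$ via Propositions~\ref{forfunct} and~\ref{Phaseprop1} together with the chain rule of Proposition~\ref{ChainRule}, then invoke the Milgram--Rosenbloom theorem for the long-time limit. Your added remarks (that harmonic $1$-forms are closed, hence the limiting functional is homotopy invariant, and that the $C^\infty$ convergence justifies the limit interchange) only make explicit details the paper leaves implicit.
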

\begin{proof}
If $\mathfrak{f}=\mathfrak{f}(t,x)$ is a solution of the heat equation for the Laplace-Beltrami operator
$\partial_t\mathfrak{f}=\Delta \mathfrak{f}$, then the time-depended family of functionals   $\mathfrak{L_{f}}(t,\gamma)=\int_0^1\mathfrak{f}(t,\gamma(\tau))d\tau$ satisfies
$$\partial_t \mathfrak{L_{f}}(\gamma)=\int_0^1\partial_t\mathfrak{f}(t,\gamma(\tau))d\tau=\int_0^1\Delta\mathfrak{f}(t,\gamma(\tau))d\tau=\Delta_L\mathfrak{L_{f}}(\gamma)$$
on the manifold $\mathcal{L}oop$.
Due to all harmonic functions on the compact manifold are constant, Milgram--Rosenbloom theorem implies that $\mathfrak{L_{f}}(t,\cdot)$ pointwise converges to a  constant functional on $\mathcal{L}oop$ as $t\to \infty$. Similarly, if  $a=a_\mu(t,x)dx^\mu$ is a solution of the heat equation $
\partial_t a=\Delta a$  for 1-forms on $M$, then 
the time-depended family of functionals 
$\Theta_a(t,\gamma)=\int_{\gamma}a(t)
$  satisfies 
$$\partial_t\Theta_a(t,\gamma)=\int_{\gamma} \partial_t a(t)=\int_{\gamma}\Delta a(t)=\Delta_L\Theta_a(t,\gamma).$$
It follows directly  from Milgram--Rosenbloom theorem that $\Theta_a(t,\cdot)$ converges pointwise to a functional given by the integral of the harmonic 1-form on $M$ as  $t\to \infty$. This integral functional is constant on homotopy equivalent paths. Hence, 
Proposition~\ref{ChainRule} implies
$$
\partial_t F(t,\gamma)=(\nabla \mathcal{F}(\Phi(t,\gamma)),\partial_t\Phi(t,\gamma))_{\mathbb{R}^{N}}=
(\nabla \mathcal{F}(\Phi(t,\gamma)),\Delta_L\Phi(t,\gamma))_{\mathbb{R}^{N}}=\Delta_LF(t,\gamma)
$$
and $F(t,\cdot)$ pointwise converges to a locally constant functional on $\mathcal{L}oop$ as $t\to \infty$.
\end{proof}

The particular case of the previous theorem is the following. If  $a=a_\mu(t,x)dx^\mu$ is a solution of the heat equation $
\partial_t a=\Delta a$  for 1-forms on $M$, then 
the time-depended family of the parallel transports
$
U^a(t,\gamma)=e^{-i\Theta_a(t,\gamma)}
$   satisfies
the heat equation for the L\'evy Laplacian
$\partial_tU^a=\Delta_LU^a$ on the manifold $\mathcal{L}oop$.

\begin{remark}
The heat equation for the L\'evy Laplacian has been studied extensively (see e.g.~\cite{Feller2005,K2003}). However, the known results on the existence and uniqueness classes for this equation are not applicable to functionals~(\ref{functmain}) defining by 1-forms. The study of the heat equation with the L\'evy Laplacian remains an open problem, of interest due to its connection with the Yang-Mills heat equation, which is discussed briefly in the next section.
\end{remark}

\section{Yang--Mills heat flow}
\label{SecYMH}

The connection between the linear heat equation with the L\'evy Laplacian for parallel transport $U^a$ and the heat flow for 1-form $a$ is preserved when passing from the commutative case to the non-commutative one. Namely, the equation for parallel transport with the L\'evy Laplacian becomes equivalent to the Yang-Mills heat equations for a time-depended connection, which are quasi-linear. 
For a detailed exposition of the theory of Yang-Mills gradient flows see e.g.~\cite{Feehan2024,Waldron}.

 Let
$E=E(\mathbb{R}^N,\pi,M,G)$  be a vector bundle over $M$ with the
projection $\pi\colon E\to M$ and the structure group  $G\subseteq
\operatorname{SO}(N)$.  Let  the Lie algebra of the  structure group
be $\mathfrak{Lie}(G)\subseteq \mathfrak{so}(N)$.
 The fiber over $x\in M$ is $E_x=\pi^{-1}(x)\cong\mathbb{R}^N$.
 Let $P$ be the principle bundle  over
$M$  associated with $E$ and  $\mathrm{ad} (P)=\mathfrak{Lie}(G)\times_G M$   be the
adjoint bundle of $P$ (the fiber of $\mathrm{ad}P$ is
  isomorphic to $\mathfrak{Lie}(G)$). A connection $A(x)=A_\mu(x)dx^\mu$ in the vector bundle $E$ is  a smooth section in
$\Lambda^1(T^\ast M)\otimes \mathrm{ad}P$. Locally, $A$
  is a smooth $\mathfrak{Lie}(G)$-valued 1-form. 
The curvature $F$ of the connection $A$ is a smooth section in $\Lambda^2(T^\ast M)\otimes \mathrm{ad}P$.
In the local trivialization, the curvature $F$ is the $\mathfrak{Lie}(G)$-valued 2-form,
where  $F_{\mu\nu}=\partial_\mu A_\nu-\partial_\nu A_\mu+[A_\mu,A_\nu]$.
The gauge transform is a smooth section in $\mathrm{Aut} P$. Such a section $\psi$ acts on the connection $A$ by the formula
\begin{equation*}
A\to A'=\psi^{-1}A\psi+\psi^{-1}d\psi
\end{equation*}
and on the curvature $F$ by the formula
\begin{equation*}
F\to F'=\psi^{-1}F\psi.
\end{equation*}

The Yang--Mills action functional is defined as $L^2$-norm of the curvature 
$$
S_{YM}(A)=\frac 12\int_{M}\|F(x)\|^2\mathrm{Vol}(dx),
$$
where  $\|F(x)\|$ is  the natural norm on $\mathfrak{Lie}(G)$-valued 2-forms generated by the Killing form on 
$\mathfrak{Lie}(G)$  and $\mathrm{Vol}$ is the Riemannian  measure on $M$.  The Yang--Mills equations are
the
Euler--Lagrange equations for this action functional. These equations  are
\begin{equation*}
D^\ast F=0,
\end{equation*}
where $D^\ast$ is a covariant сodifferential.
Locally, 
$(D^\ast F)_\nu=-\nabla^\mu F_{\mu\nu}$,
where
\begin{equation*}
 \nabla_\lambda F_{\mu
\nu}=\partial_\lambda F_{\mu \nu}+[A_\lambda,F_{\mu
\nu}]-F_{\mu
\kappa}\Gamma^\kappa_{\lambda\nu}-F_{\kappa\nu}\Gamma^\kappa_{\lambda\mu}.
\end{equation*}
The Yang--Mills heat equations are nonlinear parabolic differential
equations on a time-depended connection $A$ of the form
\begin{equation}
\label{ymh}
\partial_t A=-D^\ast F.
\end{equation}
The Yang--Mills heat flow was introduced as the gradient flow for the Yang--Mills  action functional in~\cite{AtBo}. 
In~\cite{Donaldson,DK} such a  flow was studied for the case of stable holomorphic bundles over  compact Kahler surfaces for the Hermitian Yang--Mills equations.
The existence of a Yang--Mills heat flow for all times and its convergence to a Yang--Mills connection on Riemannian manifolds of dimensions two and three was proved in~\cite{Rade}. For the critical dimension 4, the short time existence was proved in~\cite{Struwe}. The existence for all times was initially proved for the flow of the equivariant connections~\cite{SchlatterStruweTavildah-Zadeh}  and later in the general case~\cite{Waldron2016,
Waldron}. In case of dimension greater than or equal to 5, blow-up occurs and the solution cannot be continued to an infinite interval
~\cite{Naito}.

Let  $\mathcal{E}_m$ be a Hilbert vector bundle over the Hilbert manifold $\mathcal{P}ath_m$ such that its fiber over  $\gamma\in \mathcal{P}ath_m$ is the space  $\mathrm{Hom}(E_m,E_{\gamma(1)})$. The definitions of the  L\'evy Laplacian  $\Delta_L$  can been transferred  to the space of sections in the bundle $\mathcal{E}_m$. Let $U^A(\gamma)$ denote the parallel transport along the curve $\gamma\in \mathcal{P}ath_m$ generated by the connection $A$ in the vector bundle $E$.
The mapping $\mathcal{P}ath_m\ni \gamma\mapsto U^A(\gamma)$ is a $C^{\infty}$-smooth section in the vector bundle $\mathcal E_m$ (for the proof of smoothness see~\cite{Gross,Driver}). Let us formulate the theorem on the relationship between the classical solution of the Yang-Mills heat equation and the heat equation for the L\'evy Laplacian.

\begin{theorem}
\label{maintheorem}
Let $A(\cdot,\cdot)\in C^{\infty}([0,T]\times M,\Lambda^1\otimes adP)$.
The following two assertions are equivalent:
\begin{enumerate}
\item  the flow of connections $[0,T]\ni t\mapsto A(t)=A(t,\cdot)$ is a
solution of the Yang--Mills  heat equations~(\ref{ymh});
\item the flow of parallel transports   $[0,T]\ni t\mapsto U(t)=U^{A(t)}$  is a solution
of the heat equation for the L\'evy Laplacian
$\partial_tU=\Delta_LU$.
\end{enumerate}
\end{theorem}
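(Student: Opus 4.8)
The plan is to write both sides of the desired identity, $\partial_t U^{A(t)}(\gamma)$ and $\Delta_L U^{A(t)}(\gamma)$, as one and the same $\gamma$-dependent linear functional applied to two different $\mathrm{ad}P$-valued $1$-forms, and then to reduce the equivalence of the two assertions to the injectivity of that functional as $\gamma$ ranges over $\mathcal{P}ath_m$. This is the non-commutative counterpart of the computation behind Proposition~\ref{Phaseprop1} and the subsequent Proposition in Sec.~\ref{Secfunctionals}, where the abelian holonomy $U^a=e^{-i\Theta_a}$ was treated.

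First I would differentiate in time. Denote by $U^{A(t)}_{\tau_2,\tau_1}(\gamma)$ the partial parallel transport along $\gamma|_{[\tau_1,\tau_2]}$; then $U^{A(t)}_{\tau,0}(\gamma)$ solves the linear ODE $\tfrac{d}{d\tau}U^{A(t)}_{\tau,0}=-A(t,\gamma(\tau))(\dot\gamma(\tau))\,U^{A(t)}_{\tau,0}$ with $U^{A(t)}_{0,0}=\mathrm{Id}$. A variation-of-parameters (Duhamel) computation then yields, for any $\mathrm{ad}P$-valued $1$-form $\alpha$, the first-variation formula $\tfrac{d}{ds}\big|_{s=0}U^{A+s\alpha}(\gamma)=\Phi_\gamma(\alpha)$, where
\[
\Phi_\gamma(\alpha):=-\int_0^1 U^{A}_{1,\tau}(\gamma)\,\alpha(\gamma(\tau))(\dot\gamma(\tau))\,U^{A}_{\tau,0}(\gamma)\,d\tau .
\]
In particular $\partial_t U^{A(t)}(\gamma)=\Phi_\gamma(\partial_t A)$.

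Next I would compute $\Delta_L U^A=\mathrm{div}_L(\mathrm{grad}_{H^0}U^A)$ for the matrix-valued holonomy section, following the scheme of the proof of Proposition~\ref{Phaseprop1}: first determine the $H^0$-gradient of $U^A$, then the covariant second derivative $\nabla^{\mathcal{H}^0}_X\mathrm{grad}_{H^0}U^A$ via~(\ref{covdev}), and read off its L\'evy kernel. Because the L\'evy Laplacian sees only the diagonal (L\'evy) part of the second derivative, the double insertions of $A$ contribute only to the Volterra kernel and drop out (as the Volterra kernel did in the proof of Proposition~\ref{Phaseprop1}), leaving a single insertion of the curvature; taking the metric trace $\operatorname{tr}_g$ of the L\'evy kernel and using the identity $(D^\ast F)_\nu=-\nabla^\mu F_{\mu\nu}$ should give the master formula
\[
\Delta_L U^A(\gamma)=\int_0^1 U^{A}_{1,\tau}(\gamma)\,(D^\ast F)(\gamma(\tau))(\dot\gamma(\tau))\,U^{A}_{\tau,0}(\gamma)\,d\tau=\Phi_\gamma(-D^\ast F),
\]
which for $A=ia$ reduces to the abelian computation of $\Delta_L U^a$ in Sec.~\ref{Secfunctionals} and recovers the known static correspondence $\Delta_L U^A=0\iff D^\ast F=0$.

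Combining the two steps, for every $\gamma\in\mathcal{P}ath_m$ one gets $\partial_t U^{A(t)}(\gamma)-\Delta_L U^{A(t)}(\gamma)=\Phi_\gamma(\partial_t A+D^\ast F)$, so assertion~(1) immediately implies assertion~(2). For the converse I would set $\beta:=\partial_t A+D^\ast F$ and, conjugating by $U^A(\gamma)^{-1}$, rewrite $\Phi_\gamma(\beta)=0$ as $\int_0^1 U^{A}_{1,\tau}(\gamma)\,\beta(\gamma(\tau))(\dot\gamma(\tau))\,U^{A}_{1,\tau}(\gamma)^{-1}\,d\tau=0$ for all $\gamma$. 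The main obstacle is then to conclude $\beta\equiv 0$ from the vanishing of these holonomy-weighted line integrals over all curves: localizing to curves concentrated near an arbitrary point $x$ with prescribed tangent direction $v$, so that the transport factors tend to the identity, should isolate the value $\beta(x)(v)$ and force it to vanish for all $x,v$, i.e.\ assertion~(1). This reconstruction of a pointwise $1$-form from holonomy data is exactly the kind of argument underlying the holonomy characterizations of connections cited in the introduction (cf.\ Driver's theorem); carrying it out rigorously, together with justifying the differentiation under the integral sign in the two preceding steps, is where the real work lies.
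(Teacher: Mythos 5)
The paper itself does not reproduce a proof of this theorem --- it defers entirely to the reference \cite{Volkov2019a} --- so there is no in-text argument to compare against line by line; but your outline is the standard one behind that citation, and your reduction of both directions to the single identity $\partial_t U^{A(t)}(\gamma)-\Delta_L U^{A(t)}(\gamma)=\Phi_\gamma(\partial_t A+D^\ast F)$ is exactly the right skeleton, with signs consistent with the convention $\partial_t A=-D^\ast F$ of equation~(\ref{ymh}). The one step you should tighten is the converse (injectivity of $\beta\mapsto\Phi_\cdot(\beta)$). As literally stated, ``localizing to curves concentrated near an arbitrary point $x$ so that the transport factors tend to the identity'' is not available on $\mathcal{P}ath_m$: every curve is anchored at $m$, so for $x\neq m$ you cannot shrink the whole curve to $x$, and the transports $U_{\tau,0}$, $U_{1,\tau}$ do not tend to the identity. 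The working argument instead exploits the \emph{free endpoint}: fix a path $\sigma$ from $m$ to $x$, append a short segment $\tau\mapsto\exp_x(\epsilon\tau v)$, and differentiate the vanishing integral in $\epsilon$ at $\epsilon=0$; since $\Phi_\sigma(\beta)=0$ already kills the contribution of the fixed stretch, what survives is $\beta(x)(v)$ conjugated by invertible transports, forcing $\beta(x)(v)=0$. This is precisely why the paper's Remark~3 stresses that the equivalence uses paths with a free end and that on the loop manifold only the implication $(1)\Rightarrow(2)$ survives --- a dichotomy your sketch does not yet account for. With that replacement (and the routine justification of differentiating under the integral, which you already flag), your argument goes through and recovers, in the abelian specialization, the computation of $\Delta_L U^a$ in Sec.~\ref{Secfunctionals} up to the sign of $\Delta_L\Theta_a$ there.
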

For proof see~\cite{Volkov2019a}. 
Note that, if $A(\cdot,\cdot)\in C^{\infty}([0,\infty)\times M,\Lambda^1\otimes adP)$ satisfies~(\ref{ymh}) and converges in $C^\infty$ to a Yang-Mills connection  as $t\to \infty$ then the flow of the parallel transports  $U(t)$ pointwise  tends to the solution of the Laplace equation
for the L\'evy Laplacian. We conjecture  that there should be a stronger convergence of the solutions of the L\'evy Laplacian heat equation. A natural question is the behavior of the solution of the heat equation with the L\'evy Laplacian  in the case of the blow-up of the solution of the Yang-Mills heat equations. 
\begin{remark}
To prove the equivalence in Theorem~\ref{maintheorem}, it is essentially used that the infinite dimensional base manifold is the manifold of paths with a free end and not the manifold of loops. For the manifold of loops, it can be shown that the second condition of the theorem follows from the first condition.
\end{remark}

\section{Conclusion}
\label{Sec:Conclusions}
 
In this article, we systematized different definitions of the L\'evy Laplacian on an infinite-dimensional manifold of paths  and proved the equivalence of these definitions. It is shown that the eigenfunctions of the Laplace-Beltrami operator and the Hodge-de Rham Laplacian for 1-forms generate a family of eigenfunctions for the L\'evy Laplacian. It is shown that the heat flows of differential 0-forms and 1-forms generate solutions of the heat equation with the L\'evy Laplacian. These solutions tend to locally constant  functionals as time tends to infinity. The relationship between the solutions of the  heat equation with the L\'evy Laplacian and the Yang-Mills heat equations is discussed.

\section*{Acknowledgments}
This work was supported by the Russian Science Foundation under grant no. 24-11-00039, https://rscf.ru/en/project/24-11-00039/ .

\bibliographystyle{unsrt}
\bibliography{references}
\end{document}